\newcommand{\ALOOP}[1]{\ALC@it\algorithmicloop\ #1
\begin{ALC@loop}}
\newcommand{\ENDALOOP}{\end{ALC@loop}\ALC@it\algorithmicendloop}
\newtheorem{theorem}{\textbf{\emph{Theorem}}}
\newtheorem{definition}{\textbf{\emph{Definition}}}
\newcommand{\main}{SecSkyline}
\newcommand{\csa}{$C_{\{1,2\}}$}
\newcommand{\revise}{\textcolor{black}}
\begin{document}
	
\title{SecSkyline: Fast Privacy-Preserving Skyline Queries over Encrypted Cloud Databases}
	
\author{Yifeng Zheng, Weibo Wang, Songlei Wang, Xiaohua Jia, \emph{Fellow, IEEE}, Hejiao Huang, \\ and Cong Wang, \emph{Fellow, IEEE}
		
		\IEEEcompsocitemizethanks{
			\IEEEcompsocthanksitem Yifeng Zheng, Weibo Wang, and Songlei Wang are with the School of Computer Science and Technology, Harbin Institute of Technology, Shenzhen, Guangdong 518055, China (e-mail: yifeng.zheng@hit.edu.cn, weibo.wang.hitsz@outlook.com,
			and songlei.wang@outlook.com).
			\IEEEcompsocthanksitem Xiaohua Jia is with the School of Computer Science and Technology, Harbin Institute of Technology, Shenzhen, Guangdong 518055, China, and also with the Department of Computer Science, City University of Hong Kong, Hong Kong, China (e-mail: csjia@cityu.edu.hk).
			
		    \IEEEcompsocthanksitem Hejiao Huang is with the School of Computer Science and Technology, Harbin Institute of Technology, Shenzhen, Guangdong 518055, China, and also with the Guangdong Provincial Key Laboratory of Novel Security Intelligence Technologies (e-mail: huanghejiao@hit.edu.cn).
		
			\IEEEcompsocthanksitem C. Wang is with the Department of Computer Science, City University of Hong Kong, Hong Kong, China (e-mail: congwang@cityu.edu.hk). 
			
			\IEEEcompsocthanksitem Corresponding author: Yifeng Zheng.
		}
	}
	\IEEEtitleabstractindextext{
		\begin{abstract}
		The well-known benefits of cloud computing have spurred the popularity of database service outsourcing, where one can resort to the cloud to conveniently store and query databases. Coming with such popular trend is the threat to data privacy, as the cloud gains access to the databases and queries which may contain sensitive information, like medical or financial data. A large body of work has been presented for querying encrypted databases, which has been mostly focused on secure keyword search. In this paper, we instead focus on the support for secure skyline query processing over encrypted outsourced databases, where little work has been done. Skyline query is an advanced kind of database query which is important for multi-criteria decision-making systems and applications. We propose SecSkyline, a new system framework building on lightweight cryptography for fast privacy-preserving skyline queries. SecSkyline ambitiously provides strong protection for not only the content confidentiality of the outsourced database, the query, and the result, but also for data patterns that may incur indirect data leakages, such as dominance relationships among data points and search access patterns. Extensive experiments demonstrate that SecSkyline is substantially superior to the state-of-the-art in query latency, with up to $813\times$ improvement.
        
		\end{abstract}
		
		\begin{IEEEkeywords}
			Secure skyline queries, encrypted databases, secure outsourcing, cloud computing
		\end{IEEEkeywords}
	}

	\maketitle

	\IEEEdisplaynontitleabstractindextext
	
	\IEEEpeerreviewmaketitle

\section{Introduction}

\label{sec:intro}

%
Due to the well-known benefits of cloud computing \cite{QinW0018,JiangWHWLSR21}, there has been growing popularity of enterprises or organizations leveraging commercial clouds to store and query their databases (e.g., \cite{Cox,ADP,airbnb,PIXNET}, to list a few).
However, as databases may contain rich sensitive and proprietary information (like databases of medical records or financial records), deploying such database services in the cloud may raise critical privacy concerns.
Therefore, there is an urgent demand that security must be embedded in such database outsourcing services, providing protection for the information-rich databases, private queries, as well as query results.
In the literature, a large body of work has been presented for querying encrypted databases, which has been mostly focused on secure keyword search \cite{ghareh2018new,DautermanFLPS20,sun2021practical,gui2023rethinking}.

In this paper, we instead focus on secure skyline queries over outsourced databases, where little work has been done.
Given a query point, skyline query aims to retrieve a set of data points (called skyline points) which are not dominated by any other data point from a multi-dimensional database \cite{mohamed2008skyline}.
In particular, given a query point $\mathbf{q}$ and two data points $\mathbf{a}$ and $\mathbf{b}$ in the target database, $\mathbf{a}$ is said to dominate $\mathbf{b}$ if $\mathbf{a}$ is nearer to $\mathbf{q}$ than $\mathbf{b}$ at least in one dimension and not farther in other dimensions.
Skyline query is highly useful for multi-criteria decision-making systems in different domains, such as web information systems \cite{balke2004efficient}, wireless mobile ad-hoc networks \cite{huang2006skyline}, and geographical information systems \cite{deng2007multi}, especially when it is hard to define a single distance metric with all dimensions \cite{liu2017secure}.
%

%

To make the problem we focus on more concrete, we brief an example application to demonstrate how skyline query works.
Consider a medical institution who outsources its database of medical records to the cloud to share its diagnosis experiences.
Table \ref{table:OriData} shows the original database $\mathbf{P}$, where each record (i.e., a tuple) contains values for health-related attributes about a patient, including the respiratory rate (R) and heart rate (H). 
A doctor from another medical organization has a patient record with $(R=16, H=100)$, and wants to retrieve records from $\mathbf{P}$ whose conditions are similar to that of the patient based on skyline query processing.
%
Therefore, the doctor sends a query $\mathbf{q}=(16,100)$ to the cloud.
Upon receiving $\mathbf{q}$, the cloud first maps each record in $\mathbf{P}$ to $\mathbf{T}$ (i.e., Table \ref{table:MapData}) using the mapping function \cite{liu2017secure,liu2019secure} $\mathbf{t}_i[j]=|\mathbf{p}_i[j]-\mathbf{q}[j]|, i\in[1,4],j\in[1,2]$.
After that, the cloud finds intermediate skyline tuples $\{\mathbf{t}_{\star}\}$ from $\mathbf{T}$ if each $\mathbf{t}_{\star}$ cannot be dominated by any other tuple in $\mathbf{T}$.
The final returned patients' records (i.e., target skyline tuples) are $\mathbf{p}_{1}$ and $\mathbf{p}_{4}$, because $\mathbf{t}_{1}$ dominates $\mathbf{t}_{2}$ but does not dominate $\mathbf{t}_{3}$ and $\mathbf{t}_{4}$; and $\mathbf{t}_{4}$ dominates $\mathbf{t}_{3}$. Formal definition of dominance is given in Section \ref{subsec:Skyline}.
The dominance relationships in this example is illustrated in Fig. \ref{fig:Sample}.

The challenge that we aim to tackle in this paper is how to enable fast and privacy-preserving skyline queries over encrypted cloud databases.
With respect to the above application scenario as an example, we aim to allow the cloud hosting the database $\mathbf{P}$ in encrypted form to produce the skyline query result $\{\mathbf{p}_{1}, \mathbf{p}_{4}\}$ in encrypted form as well.
Meanwhile, besides ensuring the data content confidentiality for direct protection, it is also demanded that the cloud should be prevented from knowing data patterns which may cause indirect data leakage \cite{liu2017secure,liu2019secure,ding2021efficient}. 
Such data patterns include the dominance relationships among database tuples, the number of database tuples that each skyline tuple dominates, and the search access patterns.
Here the search pattern implies whether a new skyline query has been issued before and the access pattern reveals which database tuples are the skyline tuples.

In the literature, privacy-preserving skyline queries has recently received increasing attentions and several research endeavors have been proposed \cite{bothe2014skyline,liu2017secure,liu2019secure,ding2021efficient,wang2022efficient,wang2020stargazing,zhang2022efficient}.
The state-of-the-art prior works \cite{liu2019secure,ding2021efficient} that are most related to ours rely on heavy cryptosystems to craft secure protocols, leading to substantial performance overheads which heavily hinders the practical usability. 
In particular, even over very small-scale databases (e.g., with 1000 2-dimensional tuples), the state-of-the-art works \cite{liu2019secure,ding2021efficient} still require processing latency of more than \textit{1000} seconds and \textit{100} seconds, respectively.
Therefore, how to enable privacy-preserving skyline queries with practical performance is still challenging and remains to be fully explored.

In light of the above, in this paper, we propose {\main}, a new system framework that allows fast privacy-preserving skyline queries over encrypted databases outsourced to the cloud.
Different from prior arts \cite{liu2019secure,ding2021efficient}, {\main} fully utilizes a lightweight cryptographic technique---additive secret sharing \cite{demmler2015aby} and achieves substantially superior performance in query latency.
We first conduct an in-depth examination on the procedure of skyline query processing and identify that it can be decomposed into several essential components for which we provide customized secure realizations. 
%


  \begin{table}[t!]
  \small
 	\begin{minipage}[t]{0.49\linewidth}
 		\centering
 		\caption{Original Database $\mathbf{P}$}
 		\label{table:OriData}
 		\begin{tabular}{ccc} 
 			\toprule
 			Record&R&H\\
 			\midrule
 			$\mathbf{p}_1$&15&102\\
 			$\mathbf{p}_2$&14&97\\
 			$\mathbf{p}_3$&20&99\\
 			$\mathbf{p}_4$&19&101\\
 			\bottomrule
 		\end{tabular}
 	\end{minipage}%
 	\begin{minipage}[t]{0.49\linewidth}
 		\centering
 		\caption{Mapped Database $\mathbf{T}$}
 		\label{table:MapData}
 		\begin{tabular}{ccc}
 			\toprule 
 			Record&R&H\\
 			\midrule
 			$\mathbf{t}_1$&1&2\\
 			$\mathbf{t}_2$&2&3\\
 			$\mathbf{t}_3$&4&1\\
 			$\mathbf{t}_4$&3&1\\
 			\bottomrule
 		\end{tabular}
 	\end{minipage}
 
 \end{table}

 \begin{figure}[t!]
 	\centering
 	\includegraphics[width = 0.6\linewidth]{./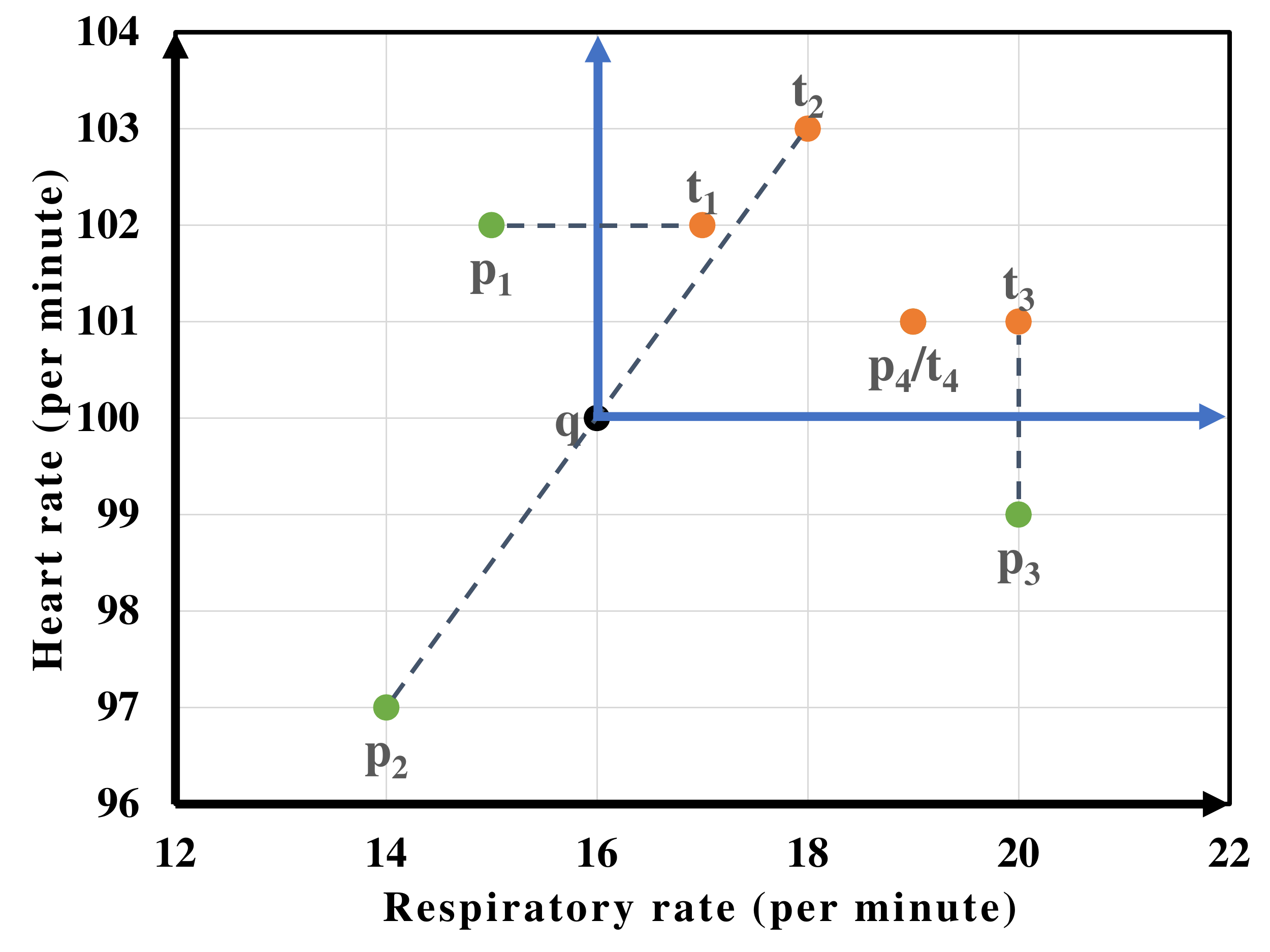}
 	\caption{An example of skyline query under a query point $\mathbf{q}$.}
 	\label{fig:Sample}
 \end{figure}

Specifically, we first consider how to support \emph{secure database mapping} given an encrypted query, allowing the cloud to securely map the encrypted outsourced database to the new space so as to facilitate the subsequent secure skyline tuples fetching. 
{\main} introduces an effective technique to tackle the challenging operation of computing absolute value in the secret sharing domain, securely realizing the operation of database mapping.
Then, {\main} introduces techniques to support \emph{secure skyline fetching}, allowing the cloud to obliviously fetch skyline tuples without knowing which tuples they are in the database.
After that, {\main} provides techniques for \emph{secure skyline and dominated tuples filtering} to tackle the remaining challenge, i.e., how to allow the cloud to obliviously filter out a  currently found skyline tuple and the tuples dominated by it from the mapped database without knowing which tuples they are and the number of dominated tuples.
The synergy of these secure components lead to the full protocol for fast privacy-preserving skyline queries developed in {\main}.

We implement our protocol and conduct extensive experiments on several datasets. 
The experiment results show that {\main} achieves substantial performance boost compared to the state-of-the-art works FSSP \cite{liu2019secure} and SMSQ \cite{ding2021efficient}.
Specifically, {\main} improves upon FSSP by up to $\mathbf{8130\times}$ and improves upon SMSQ by up to $\mathbf{813\times }$ in query latency.
We highlight our main contributions below:
\begin{itemize}
	\item  We present {\main}, a new system framework for secure skyline queries over encrypted databases outsourced to the cloud, which provides strong protection for the content confidentiality of the outsourced database, the query, and the result, as well as data patterns that may incur indirect data leakages.

	\item We devise a suite of secure and lightweight components to support oblivious skyline query processing at the cloud, including secure database mapping, secure skyline tuples fetching, and secure skyline and dominated tuples filtering.
		
	\item We formally analyze the security of {\main} and conduct extensive evaluations over several datasets.
	The results demonstrate that under the same system model and security guarantees, {\main} can achieve up to $813\times$ better query latency over the state-of-the-art \cite{ding2021efficient}, with promising scalability.
	
\end{itemize}

The rest of this paper is organized as follows.
Section \ref{sec:related} discusses the related work.
In Section \ref{sec:prelim}, we introduce preliminaries.
Then, we introduce our system architecture and threat model in Section \ref{sec:problem}.
After that, we present the design of {\main} in Section \ref{sec:design}, followed by security analysis and experiments in Section \ref{sec:security} and Section \ref{sec:expe}, respectively.
Finally, we conclude this paper in Section \ref{sec:conclusion}.

\section{Related Work}
\label{sec:related}
\subsection{Skyline Query in Plaintext Domain}

The skyline operator in the database filed is first proposed by B\"{o}rzs\"{o}nyi \textit{et al}. \cite{borzsonyi2001skyline}.
Since this seminal work, great efforts have been devoted to advancing the design of skyline query schemes. 
Kossmann \textit{et al}. \cite{kossmann2002shooting} study the online skyline using the nearest neighbor method.
Papadias \textit{et al}. \cite{papadias2005progressive} propose the branch and bound skyline algorithm, achieving performance boost in terms of efficiency and storage over  prior works.
The problem of skyline queries in different scenarios has also been widely studied, such as skyline on data streams \cite{tao2006maintaining}, uncertain skyline \cite{liu2015finding,pei2007probabilistic}, and group-based skyline \cite{liu2015findinga,yu2017fast}.
However, all of them consider the execution of skyline queries in the plaintext domain without considering privacy protection.

 \begin{algorithm}[t!]
            \caption{Skyline Query in Plaintext}
            \label{alg:skyline_plaintext}
            \begin{algorithmic}[1]
                \REQUIRE \revise{An} $m$-dimensional database $\mathbf{P}$ of $n$ tuples and a query tuple $\mathbf{q}$.
                \ENSURE The set of skyline tuples $\mathcal{SK}_{\mathbf{q}}$ with respect to $\mathbf{q}$.

                \FOR{$i=1$ to $n$}\label{alg:main_map_start}
                    \FOR{$j=1$ to $m$}
                        \STATE $\mathbf{t}_i[j]=|\mathbf{p}_i[j]-\mathbf{q}[j]|$.
                    \ENDFOR
                \ENDFOR
                
                \STATE Set  $\{\mathbf{t}_1, \cdots,\mathbf{t}_n\}$ as the initial mapped database $\mathbf{T}^{(0)}$. \label{alg:main_map_end}
                
                \FOR{$i=1$ to $n$}\label{alg:main_sum_begin}
                
                \STATE $\mathbf{s}[i]=\sum_{j=1}^{m}\mathbf{t}_i[j]$. 
                \ENDFOR\label{alg:main_sum_end}
                
                \STATE $k=0$.
                \WHILE{$\mathbf{T}^{(k)}\neq\emptyset$}
                    \STATE Select from the current mapped database $\mathbf{T}^{(k)}$ the tuple $\mathbf{t}_{i}$ with the minimum $\mathbf{s}[i]$, denoted by $\mathbf{t}_{\star}$. \label{alg:select_mim}
                    
                    \STATE Add the tuple in $\mathbf{P}$ corresponding to $\mathbf{t}_{\star}$ to the skyline pool $\mathcal{SK}_{\mathbf{q}}$. \label{alg:select_p_sky}

                    \STATE Delete $\mathbf{t}_{\star}$ and tuples dominated by $\mathbf{t}_{\star}$ from $\mathbf{T}^{(k)}$. \label{alg:elim_start}

                    \STATE $\mathbf{T}^{(k+1)}=\mathbf{T}^{(k)}$.
                    \STATE $k++$\label{alg:elim_end}.
                \ENDWHILE
                \RETURN $\mathcal{SK}_{\mathbf{q}}$.
            \end{algorithmic}
        \end{algorithm}
        
\subsection{Secure Skyline Query Processing}

%
%

Bothe \textit{et al.} \cite{bothe2014skyline} initiate the first study on secure skyline query processing.
They introduce a preliminary approach that relies on a mechanism which multiplies vectors via secret matrices for protection.
Their approach does not provide formal and rigorous security guarantees.
Recently, Liu \textit{et al.} propose the FSSP scheme \cite{liu2019secure} (which first appeared in \cite{liu2017secure}), and Ding \textit{et al.} propose the SMSQ scheme \cite{ding2021efficient}.
Both these recent schemes provide strong cryptographic guarantees for the databases,the  skyline queries, as well as the query results.
However, as mentioned above, FSSP \cite{liu2019secure} and SMSQ \cite{ding2021efficient} rely on the use of heavy cryptosystems and incur substantial performance overheads, which heavily affect their practical usability.
In contrast, {\main} is a new system design for fast privacy-preserving skyline queries over encrypted databases hosted in the cloud, which fully builds on lightweight cryptography and achieves performance substantially better than the state-of-the-art prior schemes FSSP \cite{liu2019secure} and SMSQ \cite{ding2021efficient}.


%

There are some works \cite{zhang2022efficient,wang2022efficient} focusing on privacy-preserving skyline query under application scenarios different from ours.
Specifically, the work \cite{zhang2022efficient} studies privacy-preserving user-defined skyline queries, focusing on a different and simplified case of constrained subspace skyline queries, where the client specifies a constrained region to search.
The underlying skyline query targeted in our security design as well as the prior works \cite{liu2019secure,ding2021efficient} is generic and much more challenging.
In addition, it is noted that the scheme in \cite{zhang2022efficient} does not offer protection for the access pattern.
Wang \textit{et al}. \cite{wang2022efficient} focus on the support for verifiability with respect to \textit{location-based} skyline queries where the client is only allowed to customize its skyline queries with two spatial attributes.
In addition, to achieve affordable online query latency, the scheme in \cite{wang2022efficient} requires the data owner to pre-compute the dominance relationships with respect to non-spatial attributes among database tuples before outsourcing them to the cloud.
%
%
\revise{An additional work by Wang \textit{et al.} \cite{wang2020stargazing}
proposes a trusted hardware-based approach for privacy-preserving skyline query.
Such approach requires to put additional trust on trusted hardware vendors.
Moreover, in recent years, various attacks against trusted hardware have been proposed \cite{hahnel2017high,van2017telling,lee2017inferring,lee2020off}, which pose severe threats to trusted hardware-based secure systems, but the solution in \cite{wang2020stargazing} does not consider these attacks.
Hence, the state-of-the-art prior works that are most related to ours are \cite{liu2019secure,ding2021efficient}.}

\section{Preliminaries}
\label{sec:prelim}
\subsection{Skyline Query}
\label{subsec:Skyline}

    \begin{definition}
        \label{def:skyline}
       Given a database $\mathbf{P}=\{\mathbf{p}_1,\cdots,\mathbf{p}_n\}$, where each database tuple $\mathbf{p}_i$ ($i\in[1,n]$) is an $m$-dimensional vector, i.e., a tuple where each dimension corresponds to an attribute.
        Let $\mathbf{p}_a$ and $\mathbf{p}_b$ be two different tuples in $\mathbf{P}$.
        We say $\mathbf{p}_a$ dominates $\mathbf{p}_b$, if and only if $\forall j\in[1,m] $, $\mathbf{p}_a[j]\le\mathbf{p}_b[j]$ and $\exists j\in[1,m]$, $\mathbf{p}_a[j]<\mathbf{p}_b[j]$.
        Then the skyline tuples are tuples that are not dominated by any other tuple.
    \end{definition}

Given a query tuple, the skyline query targeted in this paper aims to retrieve from a database tuples that are not dominated by any other tuple \cite{DellisS07,liu2017secure}. 
\revise{The formal definition of skyline query considered in this paper is given below \cite{liu2019secure}}:



    \begin{definition}
        \label{def:dynSkyline}
        Given a query tuple $\mathbf{q}$ and a database $\mathbf{P}=\{\mathbf{p}_1,\cdots,\mathbf{p}_n\}$, where $\mathbf{q}$ has the same dimension as each tuple in $\mathbf{P}$. 
        Let $\mathbf{p}_a$ and $\mathbf{p}_b$ be two different tuples in $\mathbf{P}$.
        We say $\mathbf{p}_a$ dynamically dominates $\mathbf{p}_b$ with respect to $\mathbf{q}$, if and only if $~\forall j\in[1,m]$, $|\mathbf{p}_a[j]-\mathbf{q}[j]|\le|\mathbf{p}_b[j]-\mathbf{q}[j]|$, and $\exists ~ j\in[1,m]$, $|\mathbf{p}_a[j]-\mathbf{q}[j]|<|\mathbf{p}_b[j]-\mathbf{q}[j]|$.
        A skyline tuple with respect to $\mathbf{q}$ is a tuple that is not dominated by any other tuple. The set of skyline tuples under $\mathbf{q}$ is denoted by $\mathcal{SK}_{\mathbf{q}}$.

    \end{definition}

         Algorithm \ref{alg:skyline_plaintext} shows the plaintext-domain processing of the skyline query \cite{liu2017secure,liu2019secure}.
        Given a database $\mathbf{P}$ and a query tuple $\mathbf{q}$, the first step is to map the database $\mathbf{P}$ to a new database (referred to as \textit{mapped database}) with respect to $\mathbf{q}$ (i.e., lines \ref{alg:main_map_start}-\ref{alg:main_map_end}).
        Then, for each tuple in the initial mapped database (i.e., $\mathbf{T}^{(0)}$), the sum over its all attributes (i.e., lines \ref{alg:main_sum_begin}-\ref{alg:main_sum_end}) is computed.
        Skyline tuples are selected from the mapped database in turn through multiple rounds.
        In the $k$-th round ($k \geq 0$), the current mapped database $\mathbf{T}^{(k)}$ is taken as input, and a skyline tuple $\mathbf{t}_{\star}$ in $\mathbf{T}^{(k)}$ is chosen, which is the one with the smallest attribute sum. 
        The tuple in the original database $\mathbf{P}$ corresponding to $\mathbf{t}_{\star}$ is added to the skyline pool.
        After that, the skyline tuple $\mathbf{t}_{\star}$ and those tuples dominated by it are deleted from $\mathbf{T}^{(k)}$, producing the updated mapped database $\mathbf{T}^{(k+1)}$ to be used in the next round.
        %
        %
        %
        %
        The process is repeated through multiple rounds until the mapped database becomes empty.

    \subsection{Additive Secret Sharing}
    %
    \label{sec:Sharing}
    
    Additive secret sharing \cite{demmler2015aby} is a lightweight encryption technique that allows some secure computation.
    Given a secret value $x \in \mathbb{Z}_{2^l}$, additive secret sharing in a two-party setting works by splitting it into two secret shares $\langle x\rangle^A_1 \in \mathbb{Z}_{2^l}$ and $\langle x\rangle^A_2\in \mathbb{Z}_{2^l}$.
    For $l>1$, $ x=\langle x\rangle^A_1+\langle x\rangle^A_2$ in $\mathbb{Z}_{2^l}$ and such sharing is referred to as arithmetic sharing.
    For $l=1$, $ x=\langle x\rangle^A_1\oplus \langle x\rangle^A_2$ in $\mathbb{Z}_{2}$, and such sharing is referred to as binary sharing.
    Each share alone reveals not information about $x$.
    The shares are to be held by two parties $P_{1}$ and $P_{2}$ respectively for subsequent secure computation.
    We write $\llbracket x\rrbracket^A$ and $\llbracket x\rrbracket^B$ respectively to clearly distinguish between arithmetic sharing and binary sharing in the above form.

    With the shares of two secret values $x$ and $y$ held by two parties $P_{1}$ and $P_{2}$ respectively, some operations can be performed securely among them. We use arithmetic sharing to illustrate the secure computation.
    Note that in binary sharing, the only differences are that addition/subtraction operations are replaced by XOR ($\oplus$) and multiplication operations are replaced by AND ($\otimes$).

    In particular, the addition/subtraction between two secret-shared values $\llbracket x\rrbracket^{A}$ and $\llbracket y\rrbracket^{A}$ only requires local computation at each party, i.e., $\langle z\rangle^A_i=\langle x\rangle^A_i\pm\langle y\rangle^A_i,i\in \{1,2\}$. Also, the scalar multiplication between a public value $\eta$ and a secret-shared value $\llbracket x\rrbracket^{A}$ also only requires local computation,  i.e., $\langle z\rangle^A_i=\eta\cdot\langle x\rangle^A_i$. 
    The multiplication between two secret-shared values $\llbracket x\rrbracket^{A}$ and $\llbracket y\rrbracket^{A}$, however, requires one round of online communication. 
    Specifically, to compute $\llbracket z\rrbracket^A$ where $z=xy$, $P_{1}$ and $P_{2}$ need to additionally have as input a secret-shared Beaver triple $(\llbracket u\rrbracket^{A}, \llbracket v\rrbracket^{A},\llbracket w\rrbracket^{A} )$ which can be prepared offline \cite{RiaziWTS0K18}, where $w=uv$.
    Then, each party first locally computes $\langle e\rangle_i=\langle x\rangle_i-\langle u\rangle_i$, $\langle f\rangle_i=\langle y\rangle_i-\langle v\rangle_i$, and then reveal $e$ and $f$ to each other.
    Finally, $P_1$ and $P_{2}$ locally compute the secret shares of $z$ by $\langle z\rangle_1= e\cdot f+f\cdot\langle u\rangle_1+e\cdot\langle v\rangle_1+\langle w\rangle_1$ and $\langle z\rangle_2= f\cdot\langle u\rangle_2+e\cdot\langle v\rangle_2+\langle w\rangle_2$, respectively.
    For simplicity, we write $\llbracket z\rrbracket^A=\llbracket x\rrbracket^A\cdot\llbracket y\rrbracket^A$ to denote such secure multiplication. 
    \revise{In addition, the NOT operation (denoted by $\neg$) in binary 
    secret sharing domain can be realized by letting one of $P_{1}$ and $P_{2}$ locally flip the share it holds, e.g., $\langle\neg x\rangle^B_1=\neg\langle x\rangle^B_1,\langle\neg x\rangle^B_2=\langle x\rangle^B_2$.
    }

\section{Problem Statement}
\label{sec:problem}
    \subsection{System Architecture}

    Fig. \ref{fig:Arch}  illustrates the system architecture of {\main}.
    There are three kinds of entities: the data owner, the client, and the cloud.
    The data owner can be an organization (e.g., a medical institution), who has a database $\mathbf{P}$ and wants to offer skyline query services to clients (e.g., doctors in a hospital). 
    To leverage the well-known benefits of cloud computing \cite{QinW0018,JiangWHWLSR21}, the data owner intends to store the database $\mathbf{P}$ in the cloud, who then helps provide skyline query services for the client. 
   	Due to privacy concerns, it is demanded that security must be embedded in such cloud-empowered service, safeguarding the database $\mathbf{P}$, skyline query $\mathbf{q}$, as well as the corresponding query result $\mathcal{SK}_{\mathbf{q}}$.

    %
    %

    For high efficiency, in {\main} we resort to a lightweight cryptographic technique---additive secret sharing---for fast encryption of the database and skyline query and for supporting subsequent secure processing in the cloud, through a customized design.
    To be compatible with the working paradigm of additive secret sharing, the power of the cloud in {\main} is divided into two cloud servers (denoted by $C_{1}$ and $C_{2}$) who can be hosted by independent cloud service providers, e.g., Google, AWS, and Microsoft in practice. 
    Such a two-server model has also been adopted in state-of-the-art prior works on privacy-preserving skyline queries \cite{ding2021efficient,liu2017secure,liu2019secure}, as well as in other application domains \cite{mohassel2017secureml,meng2018top,chen2020metal,zheng2022optimizing,du2020graphshield,cui2020svknn,wang2022privacy,zheng2020securely,wang2022PeGraph}.
    \revise{In addition to the adoption in academia, the two-server model has also gained increasing traction in industry.
    For example, Mozilla initiates a secure telemetry data collection service on Firefox under the two-server model \cite{Mozilla}; Apple and Google collaboratively provide users with automated alerts about potential COVID-19 exposure, while providing strong privacy guarantees \cite{WhitePaper}.}
    {\main} follows such trend and contributes a new design for enabling fast privacy-preserving skyline queries over encrypted cloud databases. 

        \begin{figure}[t!]
    	\centering
    	\includegraphics[width = \linewidth]{./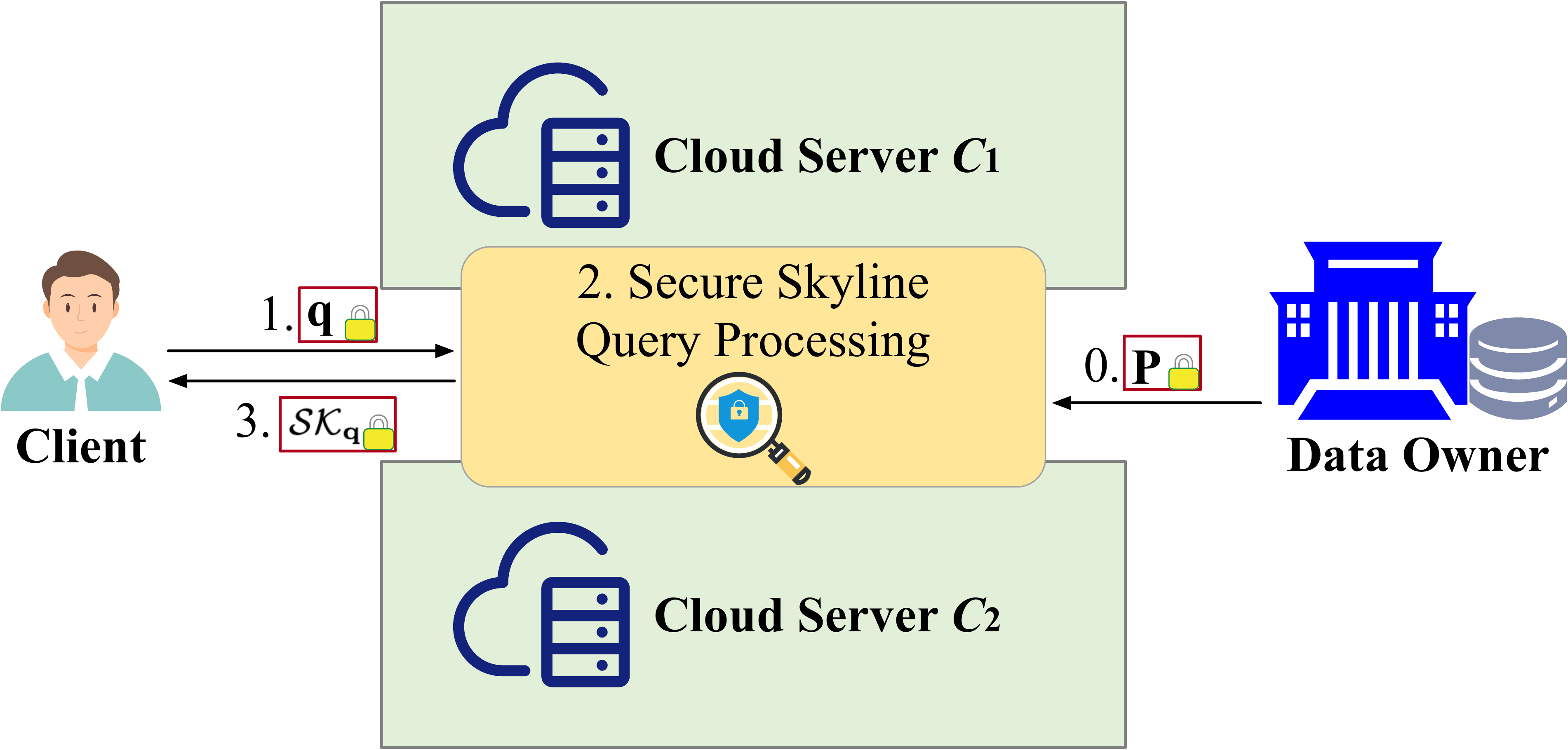}
    	\caption{The architecture of {\main}.}
    	\label{fig:Arch}
    \end{figure}

    \subsection{Threat Model}
    
    Similar to the state-of-the-art prior works on privacy-preserving skyline queries \cite{liu2019secure,ding2021efficient} as well as other works in the two-server setting \cite{du2020graphshield,chen2020metal,WangWHZR16,0002SKG19}, we assume a semi-honest and non-colluding adversary model where each cloud server honestly follows our protocol, yet may \textit{individually} attempt to learn the private information from the execution of (dynamic) skyline queries.
    Following the prior works \cite{liu2019secure,ding2021efficient}, we consider the data owner and the client as trustworthy parties, who will honestly follow the protocol specification.
   
    Under the above threat model and following the state-of-the-art prior works \cite{liu2019secure,ding2021efficient}, {\main} aims to protect against the cloud servers (i) the content of the database $\mathbf{P}$, skyline query $\mathbf{q}$, and query result $\mathcal{SK}_{\mathbf{q}}$, (ii) the dominance relationships among database tuples, (iii) the number of database tuples that each skyline tuple dominates, and (iv) search access patterns.
    \revise{Following the standard definitions in searchable encryption \cite{curtmola2006searchable}, we describe the search access patterns in secure skyline queries as follows.}

    \revise{
    \begin{definition}
        \textbf{Search pattern.} For two skyline queries $\mathbf{q}$ and $\mathbf{q}^{\prime}$,
        define $\Sigma(\mathbf{q},\mathbf{q}^{\prime})\in\{0,1\}$, where $\Sigma(\mathbf{q},\mathbf{q}^{\prime})=1$
        if and only if the two queries are identical, and otherwise $\Sigma(\mathbf{q},\mathbf{q}^{\prime})=0$.
        Here, ``identical'' means that all corresponding attribute values of $\mathbf{q}$ and $\mathbf{q}^{\prime}$ are identical.
        Let $\mathbf{Q}=\{\mathbf{q}_1,\cdots,\mathbf{q}_r\}$ be a non-empty sequence of skyline queries.
        The search pattern reveals an $r\cdot r$
        (symmetric) matrix with element $(i,j)$ equal to $\Sigma(\mathbf{q}_i,\mathbf{q}_j)$.
    \end{definition}
    In short, the search pattern implies whether a new skyline query has been issued before.}

    \revise{
    \begin{definition}
        \textbf{Access pattern.} Given a skyline query $\mathbf{q}$ on the database $\mathbf{P}$,
        the access pattern reveals the indexes of skyline tuples with respect to $\mathbf{q}$ in $\mathbf{P}$.
    \end{definition}
    In practice, the access pattern reveals which database tuples are the skyline tuples with respect to a given query.}

\section{The Design of {\main}}
\label{sec:design}
    \subsection{Overview}

    At a high level, the protocol in {\main} proceeds through the following phases.
    Firstly, in an initialization phase, the data owner adequately encrypts each tuple in its database $\mathbf{P}$ under arithmetic additive secret sharing and produces $\llbracket\mathbf{P}\rrbracket^A$. The data owner then sends the secret shares $\langle\mathbf{P}\rangle^A_{1}$ and $\langle\mathbf{P}\rangle^A_{2}$ to $C_{1}$ and $C_{2}$, respectively.
	Subsequently, it comes to the online query phase, where the client first encrypts its skyline query tuple $\mathbf{q}$ through arithmetic sharing and sends the secret shares  $\langle\mathbf{q}\rangle^A_{1}$ and $\langle\mathbf{q}\rangle^A_{2}$ to the cloud servers $C_{1}$ and $C_{2}$, respectively.
    Hereafter, for simplicity of presentation, we will write {\csa} to represent the two cloud servers $C_{1}$ and $C_{2}$.
    Upon receiving the encrypted query $\llbracket\mathbf{q}\rrbracket^A$, {\csa} securely process the encrypted skyline query over $\llbracket\mathbf{P}\rrbracket^A$ as per the customized design of {\main}.

    To allow {\csa} to perform the skyline query processing (i.e., Algorithm \ref{alg:skyline_plaintext}) in an oblivious manner, we first conduct an in-depth examination on the whole procedure and decompose it into several essential components, for which we provide customized secure realizations.
    Specifically, we identify and devise the following secure components for supporting secure skyline queries.
    \begin{itemize}
        \item \textit{Secure database mapping $\mathsf{secMap}$}. 
        Given the encrypted database $\llbracket\mathbf{P}\rrbracket^A$ and query $\llbracket\mathbf{q}\rrbracket^A$, {\main} provides $\mathsf{secMap}$ to have {\csa} securely map the encrypted database $\llbracket\mathbf{P}\rrbracket^A$ to the encrypted mapped database  $\llbracket\mathbf{T}\rrbracket^A$ with respect  to $\llbracket\mathbf{q}\rrbracket^A$ so as to facilitate the subsequent secure skyline tuples fetching.

        \item \textit{Secure skyline fetching $\mathsf{secFetch}$}. 
        Given the current encrypted mapped database, {\main} provides $\mathsf{secFetch}$ to allow {\csa} to obliviously fetch the skyline tuple $\llbracket \mathbf{t}_{\star}\rrbracket^A$ from the current mapped database without knowing which tuple it is.
        Meanwhile, $\mathsf{secFetch}$ allows {\csa} to obliviously fetch the skyline tuple $\llbracket \mathbf{p}_{\star}\rrbracket^A$ corresponding to $\llbracket \mathbf{t}_{\star}\rrbracket^A$ from $\llbracket\mathbf{P}\rrbracket^A$, which is added into the encrypted skyline pool $\llbracket\mathcal{SK}_{\mathbf{q}}\rrbracket^A$.

        \item \textit{Secure skyline and dominated tuples filtering  $\mathsf{secFilt}$}. 
        Given  $\llbracket\mathbf{T}\rrbracket^A$ and $\llbracket \mathbf{t}_{\star}\rrbracket^A$, {\main} provides $\mathsf{secFilt}$ to allow {\csa} to obliviously filter out $\llbracket \mathbf{t}_{\star}\rrbracket^A$ and the tuples dominated by $\llbracket \mathbf{t}_{\star}\rrbracket^A$ without knowing which tuples they are as well as the number of the dominated tuples.
    
    \end{itemize}

    \begin{algorithm}[t!]
        \caption{Secure Database Mapping $\mathsf{secMap}$}
        \label{alg:Mapping}
        \begin{algorithmic}[1]
            \REQUIRE The encrypted original database $\llbracket{\mathbf{P}}\rrbracket^A$ and skyline query $\llbracket\mathbf{q}\rrbracket^A$.
            
            \ENSURE The encrypted mapped database $\llbracket\mathbf{T}\rrbracket^A$.
            
            \STATE Initialization: $\llbracket\mathbf{T}\rrbracket^A=\emptyset$.
            \FOR{$i=1$ to $n$}
                \FOR{$j=1$ to $m$}

                    \STATE $\llbracket b\rrbracket^B=\mathsf{SecExt}(\llbracket\mathbf{p}_i[j]\rrbracket^A,\llbracket\mathbf{q}[j]\rrbracket^A)$.\label{line:MapMSB}

                    
                    \STATE $\llbracket b'\rrbracket^B=\llbracket \neg b\rrbracket^B$.\label{line:MapNOT}

                    \STATE $\llbracket\mathbf{t}_i[j]\rrbracket^A=\mathsf{MultiBA}(\llbracket b\rrbracket^B,\llbracket\mathbf{q}[j]\rrbracket^A-\llbracket\mathbf{p}_i[j]\rrbracket^A)+\mathsf{MultiBA}(\llbracket b^\prime\rrbracket^B,\llbracket\mathbf{p}_i[j]\rrbracket^A-\llbracket\mathbf{q}[j]\rrbracket^A)$.\label{line:MapAMB}
                \ENDFOR
                \STATE $\llbracket\mathbf{T}\rrbracket^A.append(\llbracket\mathbf{t}_i\rrbracket^A)$.
            \ENDFOR
            \RETURN $\llbracket\mathbf{T}\rrbracket^A$.
        \end{algorithmic}
    \end{algorithm}

    Next, we will introduce the detailed design of $\mathsf{secMap}$ in Section \ref{subsec:Mapping}, $\mathsf{secFetch}$ in Section \ref{subsec:find}, and $\mathsf{secFilt}$ in Section \ref{subsec:elim}.
   Afterwards, in Section \ref{subsec:secure_main}, we give the complete protocol in {\main} for secure skyline query processing at the cloud, which relies on the synergy of the three secure components devised in {\main}.

    \subsection{Secure Database Mapping} 

    \label{subsec:Mapping}

   Secure database mapping $\mathsf{secMap}$ aims at allowing {\csa} to securely map the encrypted database $\llbracket\mathbf{P}\rrbracket^A$ to the encrypted mapped database  $\llbracket\mathbf{T}\rrbracket^A$ with respect to the encrypted query $\llbracket\mathbf{q}\rrbracket^A$. 
   From the process in Algorithm \ref{alg:skyline_plaintext}, i.e., lines \ref{alg:main_map_start}-\ref{alg:main_map_end}, we observe that the challenge here is to securely calculate the absolute value $\mathbf{t}_i[j]=|\mathbf{p}_i[j]-\mathbf{q}[j]|$ in the secret sharing domain. 
   Therefore, we design a tailored protocol to allow {\csa} to securely evaluate the encrypted  absolute value $\llbracket |a-b|\rrbracket^A$ when they hold the secret sharings $\llbracket a\rrbracket^A$ and $\llbracket b\rrbracket^A$.

    Our solution to this challenge is based on the following observation:
    \begin{equation}\label{eq:abs}
   	|a-b|=(a<b) \cdot (b-a)+\neg(a<b)\cdot (a-b),
    \end{equation}
    where $\neg$ represents the NOT operation and $(a<b)=1 \in \mathbb{Z}_{2}$ if $a<b$, and $(a<b)=0 \in \mathbb{Z}_{2}$ if $a\geq b$.
    %
    %
    Given this observation, what needs to be considered is how to securely realize the computation of $(a<b)$ as well as the NOT operation in the secret sharing domain.
    \revise{As mentioned in Section \ref{sec:Sharing}, the NOT operation on a secret-shared bit can be simply achieved by letting one of {\csa} ($C_1$ undertakes this in {\main}) locally flip the share it holds.}
    So it remains to be considered how to allow {\csa} to securely evaluate $a<b$ with the secret sharings $\llbracket a\rrbracket^A$ and $\llbracket b\rrbracket^A$.

    %

    Here we resort to the strategy of secure bit decomposition in the secret sharing domain \cite{liu2021medisc,mohassel2018aby}. 
    Specifically, given $a,b\in \mathbb{Z}_{2^l}$ under two's complement representation, the most significant bit (MSB) of $a-b$ (denoted as $msb(a-b)$) can indicate whether $a<b$ or not.
    Namely, if $a-b<0$, $msb(a-b)=1$ and otherwise $msb(a-b)=0$.
    Secure extraction of the MSB in the secret sharing domain can be achieved by securely realizing a parallel prefix adder (PPA), which only requires basic $\oplus$ and $\otimes$ operations in the secret sharing domain.
    Fig. \ref{fig:PPA} illustrates an 8-bit PPA for MSB extraction.
    In \cite{liu2021medisc}, a concrete construction for secure MSB extraction based on PPA was provided, which allows two parties holding the secret sharings of two values $a$ and $b$ as input to obtain the secret sharing of the MSB of $a-b$.
    Let $\mathsf{SecExt}$ denote the secure MSB extraction protocol, for which we have $\llbracket msb(a-b)\rrbracket^B=\mathsf{SecExt}(\llbracket a\rrbracket^A,\llbracket b\rrbracket^A)$.
    For more details on the construction $\mathsf{SecExt}$, we refer the readers to \cite{liu2021medisc}.
    %
    %
    %
    %
    It is noted that the output $\llbracket msb(a-b)\rrbracket^B$ from $\mathsf{SecExt}$ is in binary secret sharing domain.
    However, we need to obtain $\llbracket |a-b|\rrbracket^A$ as the result according to the computation in Eq. \ref{eq:abs}.
    So we need to consider how to perform multiplication between $\llbracket msb(a-b)\rrbracket^B$ (i.e., $\llbracket (a<b)\rrbracket^B$) and $\llbracket b-a\rrbracket^A$ as well as $\llbracket \neg msb(a-b)\rrbracket^B$ (i.e., $\llbracket\neg (a<b)\rrbracket^B$) and $\llbracket a-b\rrbracket^A$.
    That is, given the secret sharings $\llbracket x \rrbracket^B$ and $\llbracket y \rrbracket^A$, we want to obtain $\llbracket x\cdot y \rrbracket^A$.
    Inspired by \cite{mohassel2018aby}, {\main} deals with the multiplication of secret-shared values in different domains as follows.
    \revise{\begin{enumerate}
        \item $C_1$ draws a random value $r_1\in\mathbb{Z}_{2^l}$ and constructs two messages: $m_\mu:=(\mu\oplus\langle x \rangle^B_1)\cdot\langle y\rangle^A_1-r_1,\mu\in\{0,1\}$, and then sends $m_0,m_1$ to $C_2$.
        \item $C_2$ chooses $m_\mu$ according to the secret share $\langle x \rangle^B_2$ it holds. That is, $C_2$ chooses $m_0$ if $\langle x\rangle^B_2=0$ and $C_2$ chooses $m_1$ if $\langle x\rangle^B_2=1$. Then, $C_2$ holds the intermediate value $x\cdot\langle y\rangle^A_1-r_1$ and $C_1$ holds $r_1$.
        \item For the secret share $\langle y\rangle^A_2$, $C_2$ acts as the sender and $C_1$ acts as the receiver to repeat step 1) and 2). Then $C_1$ holds the intermediate value $x\cdot\langle y\rangle^A_2-r_2$ and $C_2$ holds the random value $r_2$ it draws.
        \item $C_1$ and $C_2$ respectively compute the shares of $\llbracket x\cdot y\rrbracket^A$ by: $\langle x\cdot y\rangle^A_1=r_1+x\cdot\langle y\rangle^A_2-r_2$, $\langle x\cdot y\rangle^A_2=r_2+x\cdot\langle y\rangle^A_1-r_1$. 
        It is easy to see that $\langle x\cdot y\rangle^A_1+\langle x\cdot y\rangle^A_2=x\cdot(\langle y\rangle^A_1+\langle y\rangle^A_2)=x\cdot y$.
    \end{enumerate}}
    Finally, $C_1$ and $C_2$ can obtain the secret sharing $\llbracket x \cdot y\rrbracket^A$.
    Let $\mathsf{MultiBA}$ denote such secret-shared multiplication, for which we have $\llbracket x\cdot y \rrbracket^A=\mathsf{MultiBA}(\llbracket x \rrbracket^B,\llbracket y \rrbracket^A)$.
    Analogously, $\mathsf{MultiBA}$ can also be applied on the secret-shared component-wise multiplication between a binary secret-shared value $\llbracket x\rrbracket^B$ and an arithmetic secret-shared vector $\llbracket\mathbf{v}\rrbracket^A$, for which we have  $\llbracket x\cdot\mathbf{v}\rrbracket^A=\mathsf{MultiBA}(\llbracket x\rrbracket^B,\llbracket\mathbf{v}\rrbracket^A)$, where $x\cdot\mathbf{v}$ is a vector from component-wise multiplication.
    With the above secure operations, we present the details of secure database mapping in Algorithm \ref{alg:Mapping}.

      \begin{figure}[t!]
        \centering
        \includegraphics[width = 0.8\linewidth]{./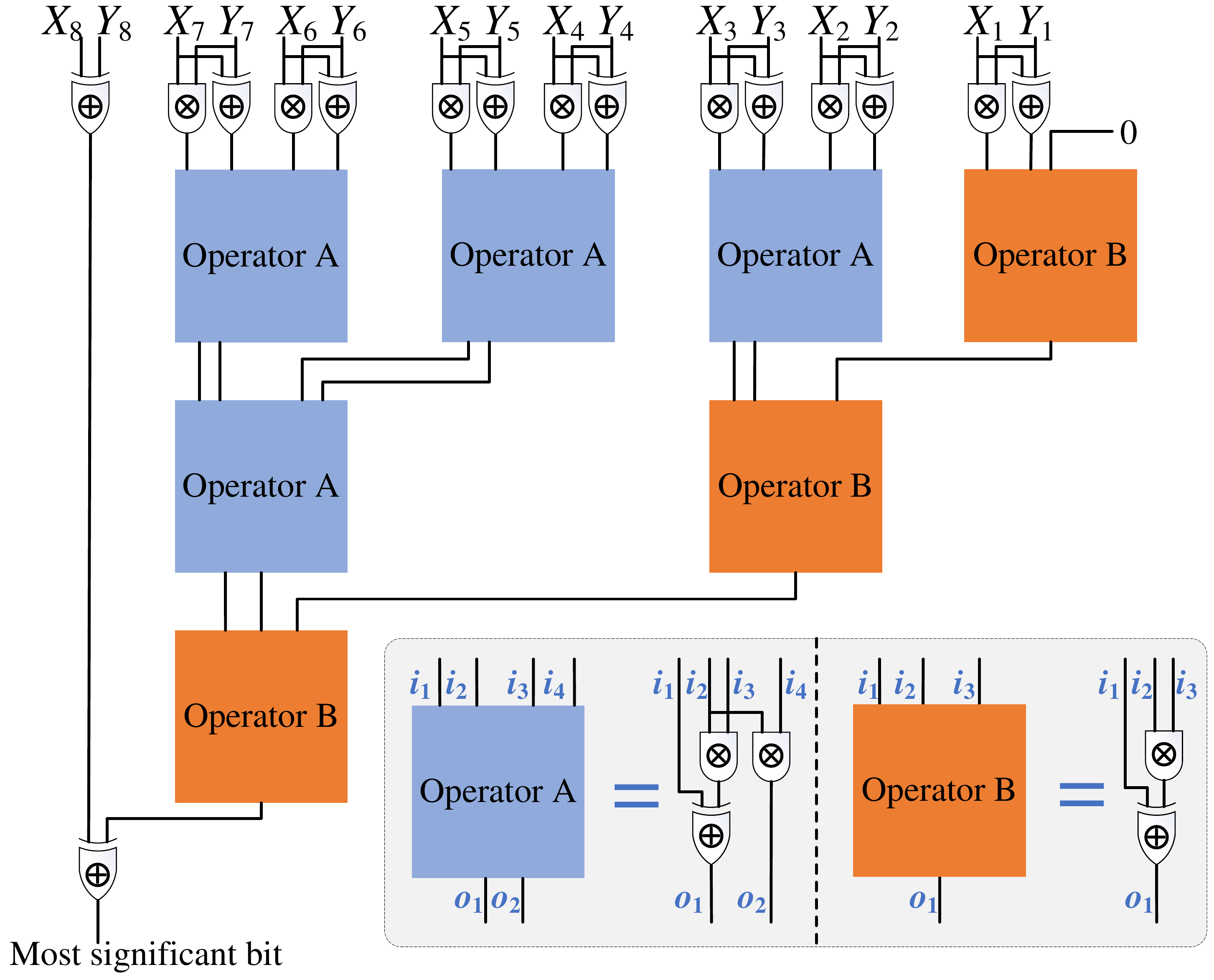}
        \caption{An 8-bit PPA for MSB extraction.}
        \label{fig:PPA}
    \end{figure}

     \subsection{Secure Skyline Fetching} 
       \label{subsec:find}
    After mapping the encrypted database $\llbracket{\mathbf{P}}\rrbracket^A$ to $\llbracket\mathbf{T}\rrbracket^A$ with respect to $\llbracket\mathbf{q}\rrbracket^A$, {\csa} need to obliviously fetch the skyline tuples $\{\llbracket\mathbf{t}_{\star}\rrbracket^A\}$ from $\llbracket\mathbf{T}\rrbracket^A$ and the skyline tuples $\{\llbracket\mathbf{p}_{\star}\rrbracket^A\}$ corresponding to $\{\llbracket\mathbf{t}_{\star}\rrbracket^A\}$ from $\llbracket\mathbf{P}\rrbracket^A$.
    For simplicity of presentation,  we next introduce how to allow  {\csa} to obliviously fetch one $\llbracket\mathbf{t}_{\star}\rrbracket^A$ from $\llbracket\mathbf{T}\rrbracket^A$ and its corresponding $\llbracket\mathbf{p}_{\star}\rrbracket^A$ from $\llbracket\mathbf{P}\rrbracket^A$.

   	According to the plaintext-domain shown in Algorithm \ref{alg:skyline_plaintext}, secure skyline fetching first needs to compute the attribute sum for each tuple $\llbracket\mathbf{t}_{i}\rrbracket^A \in \llbracket\mathbf{T}\rrbracket^A$.
    The summation operation is naturally supported in the secret sharing domain, namely,
    \begin{equation}\notag
    	\llbracket  \mathbf{s}[i]\rrbracket^A=\sum_{j=1}^{m}\llbracket\mathbf{t}_{i}[j]\rrbracket^A,
    \end{equation}  
     where $\llbracket \mathbf{s}[i]\rrbracket^A$ represents the attribute sum for tuple $\llbracket\mathbf{t}_{i}\rrbracket^A$.
     With this, {\main} devises a component $\mathsf{secFetch}$ to allow {\csa} to obliviously fetch the skyline tuple $\llbracket\mathbf{t}_{\star}\rrbracket^A$ from $\llbracket\mathbf{T}\rrbracket^A$ and its corresponding $\llbracket\mathbf{p}_{\star}\rrbracket^A$ from $\llbracket\mathbf{P}\rrbracket^A$.
     Note that $\mathbf{t}_{\star}$ refers to the tuple which has the minimum attribute sum in $\mathbf{T}$.
    Therefore, the first challenge of secure skyline fetching in the secret sharing domain is how to allow {\csa} to obliviously fetch the minimum value from a set of secret-shared values without knowing which and what value it is.

    \begin{algorithm}[t!]
    	\caption{Secure Skyline Fetching $\mathsf{secFetch}$}
    	\label{alg:bFind}
    	\begin{algorithmic}[1]
    		\REQUIRE The encrypted sum vector $\llbracket\mathbf{s}\rrbracket^A$, encrypted mapped database $\llbracket\mathbf{T}\rrbracket^A$, and encrypted original database $\llbracket\mathbf{P}\rrbracket^A$.
    		\ENSURE The encrypted minimum $\llbracket  sMin\rrbracket^A\in \llbracket\mathbf{s}\rrbracket^A$,  and the encrypted skyline tuples $\llbracket\mathbf{t}_{\star}\rrbracket^A$ and  $\llbracket\mathbf{p}_{\star}\rrbracket^A$. 
    		
    		\STATE Initialization: $\llbracket  sMin\rrbracket^A=\llbracket\mathbf{s}[1]\rrbracket^A$, $\llbracket\mathbf{t}_{\star}\rrbracket^A=\llbracket\mathbf{t}_{1}\rrbracket^A\in \llbracket\mathbf{T}\rrbracket^A$, $\llbracket\mathbf{p}_{\star}\rrbracket^A=\llbracket\mathbf{p}_{1}\rrbracket^A \in \llbracket\mathbf{P}\rrbracket^A$.
    		
    		\FOR{$i=2$ to $n$}
    		\STATE $\llbracket\varphi\rrbracket^B=\mathsf{SecExt}(\llbracket\mathbf{s}[i]\rrbracket^A,\llbracket sMin\rrbracket^A)$.\label{line:FindMSB}
    		
    		\STATE $\llbracket\varphi'\rrbracket^B=\llbracket\neg\varphi\rrbracket^B$.
    		
    		\STATE $\llbracket  sMin\rrbracket^A=\mathsf{MultiBA}(\llbracket\varphi\rrbracket^B,\llbracket\mathbf{s}[i]\rrbracket^A)+$\\$\mathsf{MultiBA}(\llbracket\varphi^\prime\rrbracket^B,\llbracket sMin\rrbracket^A)$.\label{line:FindAMB1}
    		\STATE $\llbracket\mathbf{t}_{\star}\rrbracket^A=\mathsf{MultiBA}(\llbracket\varphi\rrbracket^B,\llbracket\mathbf{t}_{i}\rrbracket^A)+$\\$\mathsf{MultiBA}(\llbracket\varphi^\prime\rrbracket^B,\llbracket\mathbf{t}_{\star}\rrbracket^A)$.
    		\STATE $\llbracket\mathbf{p}_{\star}\rrbracket^A=\mathsf{MultiBA}(\llbracket\varphi\rrbracket^B,\llbracket\mathbf{p}_{i}\rrbracket^A)+$\\$\mathsf{MultiBA}(\llbracket\varphi^\prime\rrbracket^B,\llbracket\mathbf{p}_{\star}\rrbracket^A)$.\label{line:FindAMB3}
    		\ENDFOR
    		\RETURN $\llbracket  sMin\rrbracket^A$, $\llbracket\mathbf{t}_{\star}\rrbracket^A$, and $\llbracket\mathbf{p}_{\star}\rrbracket^A$.
    	\end{algorithmic}
    \end{algorithm}

    Obviously finding the minimum value from several values essentially needs comparison followed by swapping based on the comparison.
    This can be securely realized as follows.
    %
    Firstly, given the secret sharings $\llbracket a\rrbracket^A$ and $\llbracket b\rrbracket^A$ held by  $C_1$ and $C_2$, we can first leverage $\mathsf{SecExt}$ to obtain the secret-shared result $\llbracket\varphi\rrbracket^B$ of comparison between $a$ and $b$, i.e., $\llbracket\varphi\rrbracket^B=\mathsf{SecExt}(\llbracket a\rrbracket^A,\llbracket b\rrbracket^A)$.
    Then, the smaller value among $a$ and $b$ can be obliviously selected via 
    \begin{equation}\notag
      \llbracket \min(a,b)\rrbracket^A=\mathsf{MultiBA}(\llbracket\varphi\rrbracket^B,\llbracket a\rrbracket^A)+\mathsf{MultiBA}(\llbracket\varphi'\rrbracket^B,\llbracket b\rrbracket^A),
    \end{equation} 
    where $\llbracket\varphi'\rrbracket^B=\llbracket\neg\varphi\rrbracket^B$. 
    With this as a basis, we are able to compute the minimum attribute sum in the secret sharing domain, as well as obliviously fetch the corresponding skyline tuple $\mathbf{t}_{\star}$ from $\llbracket\mathbf{T}\rrbracket^A$ and the corresponding tuple $\mathbf{p}_{\star}$ from $\llbracket\mathbf{P}\rrbracket^A$.
    In particular, when securely switching two attribute sums based on the secret-shared comparison result, we can perform secure switching of the two associated secret-shared tuples from $\llbracket\mathbf{T}\rrbracket^A$ and $\llbracket\mathbf{P}\rrbracket^A$ as well.
    The details of secure skyline fetching is presented in Algorithm \ref{alg:bFind}.
    Note that there may be more than one tuple whose attribute sum is equal to the smallest value in $\mathbf{s}$, but only
    one of them needs to be fetched in the current round, because the remainders are the skyline tuples to be 
    processed in the subsequent rounds.
    Therefore, {\main} lets {\csa} obliviously choose the first one that has the minimum attribute sum by performing
    \begin{equation}
        \llbracket\varphi\rrbracket^B=\mathsf{SecExt}(\llbracket\mathbf{s}[i]\rrbracket^A,\llbracket sMin\rrbracket^A),
    \end{equation}
    where $\varphi=1$ if and only if $\mathbf{s}[i]<sMin$, which keeps $\mathbf{t}_{\star}$ and $\mathbf{p}_{\star}$
    unchanged when $\mathbf{s}[i]=sMin$.

    Note that when implementing Algorithm \ref{alg:bFind}, we can use the trick of divide-and-conquer \cite{cole1988parallel} to boost the performance during secure minimum computation.
    For example, the minimum in a vector $\llbracket\mathbf{v}\rrbracket^A$ of four elements can be calculated by: $\min(\min(\llbracket\mathbf{v}[1]\rrbracket^A,\llbracket\mathbf{v}[2]\rrbracket^A),\min(\llbracket\mathbf{v}[3]\rrbracket^A,\llbracket\mathbf{v}[4]\rrbracket^A))$, where $\min$ $(\llbracket\mathbf{v}[1]\rrbracket^A,\llbracket\mathbf{v}[2]\rrbracket^A)$ and $\min(\llbracket\mathbf{v}[3]\rrbracket^A,\llbracket\mathbf{v}[4]\rrbracket^A)$ can be calculated in parallel, saving communication rounds.

    \subsection{Secure Skyline and Dominated Tuples \revise{Filtering}}
    \label{subsec:elim}

     So far we have introduced how {\csa} obliviously fetch the encrypted skyline tuple $\llbracket\mathbf{t}_{\star}\rrbracket^A$  from the encrypted mapped database $\llbracket\mathbf{T}\rrbracket^A$.
     Then we should consider how to allow {\csa} to obliviously filter out $\llbracket\mathbf{t}_{\star}\rrbracket^A$ and the tuples  dominated by $\llbracket\mathbf{t}_{\star}\rrbracket^A$ from $\llbracket\mathbf{T}\rrbracket^A$ without knowing which tuples they are, i.e., hiding the access pattern and the dominance relationships.
     Therefore, we devise a component $\mathsf{secFilt}$ (as given in Algorithm \ref{alg:bEli}) for secure skyline and dominated tuples \revise{ filtering}. 

 \begin{algorithm}[t!]
        \caption{Secure Skyline and Dominated Tuples \revise{ Filtering} \revise{ $\mathsf{secFilt}$}}
        \label{alg:bEli}
        \begin{algorithmic}[1]

            \REQUIRE The encrypted database $\llbracket\mathbf{T}\rrbracket^A$, skyline tuple $\llbracket\mathbf{t}_{\star}\rrbracket^A$, minimum sum of attributes $\llbracket  sMin\rrbracket^A$ and sum of attributes $\llbracket\mathbf{s}^{(k)}\rrbracket^A$.
            \ENSURE The new encrypted sum of attributes $\llbracket\mathbf{s}^{(k+1)}\rrbracket^A$.
            
            \STATE Set $\llbracket\mathsf{flag}\rrbracket^B=\llbracket0\rrbracket^B$.
            
            \FOR{$i=1$ to $n$}

               \STATE $\llbracket \sigma_{i}\rrbracket^B=\neg\mathsf{SecExt}(\llbracket sMin\rrbracket^A,\llbracket\mathbf{s}^{(k)}[i]\rrbracket^A)$.\label{alg4:mark_sky_beg}

                \STATE $\llbracket\mathsf{isFirstSky}_{i}\rrbracket^B=\llbracket \sigma_{i}\rrbracket^B\otimes \llbracket\neg\mathsf{flag}\rrbracket^B$.

               \STATE $\llbracket\mathsf{flag}\rrbracket^B=\llbracket\mathsf{flag}\rrbracket^B\oplus \llbracket\mathsf{isFirstSky}_{i}\rrbracket^B$.\label{line:EliAMB1}\label{alg4:mark_sky_end}
               
               \FOR{$j=1$ to $m$}\label{alg4:select_begin}
               \STATE $\llbracket\delta_{i,j}\rrbracket^B=\mathsf{SecExt}(\llbracket\mathbf{t}_i[j]\rrbracket^A,\llbracket\mathbf{t}_{\star}[j]\rrbracket^A)$.\label{line:EliMSB1}
               
               \ENDFOR
               
               \STATE $\llbracket \hat{\delta}_{i}\rrbracket^B=\llbracket\neg\delta_{i,1}\rrbracket^B\otimes\cdots\otimes\llbracket\neg\delta_{i,m}\rrbracket^B$. \label{alg4:select_end}
               
               \STATE$\llbracket\mathsf{isDomi}_{i}\rrbracket^B=\llbracket \hat{\delta}_{i}\rrbracket^B\otimes \llbracket \neg\sigma_{i}\rrbracket^B$.

                \STATE $\llbracket\Phi_{i}\rrbracket^B=\llbracket\mathsf{isFirstSky}_{i}\rrbracket^B \oplus\llbracket\mathsf{isDomi}_{i}\rrbracket^B$. \label{alg4:elim}

               \STATE $\llbracket\mathbf{s}^{(k+1)}[i]\rrbracket^A= \mathsf{MultiBA}(\llbracket\Phi_{i}\rrbracket^B,\llbracket\mathsf{vMAX}\rrbracket^A)+$\\$ \mathsf{MultiBA}(\llbracket \neg\Phi_{i}\rrbracket^B,\llbracket\mathbf{s}^{(k)}[i]\rrbracket^A)$.\label{line:EliAMB2}
               
            \ENDFOR
            \RETURN $\llbracket\mathbf{s}^{(k+1)}[i]\rrbracket^A$.
        \end{algorithmic}
    \end{algorithm}

    \noindent\textbf{Challenges.} There are two challenges to be tackled in $\mathsf{secFilt}$: 1) how to allow {\csa} to obliviously locate the skyline tuple and dominated tuples in $\llbracket\mathbf{T}\rrbracket^A$? 2) how to allow {\csa} to obliviously filter out these tuples?

    \noindent\textbf{Addressing the first challenge.}
    {\main} first defines two encrypted (binary) labels $\llbracket\mathsf{isFirstSky}_i\rrbracket^B$ and $\llbracket\mathsf{isDomi}_i\rrbracket^B$ for each tuple $\llbracket\mathbf{t}_{i}\rrbracket^A\in\llbracket\mathbf{T}\rrbracket^A$,
    where $\mathsf{isFirstSky}_i=1$ indicates that $\mathbf{t}_i$ is the 
    skyline tuple in the current round and $\mathsf{isDomi}_i=1$ indicates that $\mathbf{t}_i$
    is a tuple dominated by the skyline tuple.
    Then {\csa} can obliviously mark whether tuple
    $\mathbf{t}_{i}$ needs to be filtered out by calculating
    \begin{equation}\label{eq:Phi_i}
        \llbracket\Phi_{i}\rrbracket^B=\llbracket\mathsf{isFirstSky}_{i}\rrbracket^B\ \oplus \llbracket\mathsf{isDomi}_{i}\rrbracket^B.
    \end{equation}
    It is noted that since $\mathsf{isFirstSky}_i$ and $\mathsf{isDomi}_i$ cannot both be equal to 1, $\Phi_{i}=0$ indicates that both $\mathsf{isFirstSky}_i$ and $\mathsf{isDomi}_i$ are equal to 0 and   $\mathbf{t}_{i}$ does not need to be filtered out, and  $\Phi_{i}=1$ indicates that $\mathsf{isFirstSky}_i$ or $\mathsf{isDomi}_i$ is equal to 1 and  $\mathbf{t}_{i}$ needs to be filtered out.
    Next, we introduce how {\csa} obliviously evaluate $\llbracket\mathsf{isFirstSky}_i\rrbracket^B$ and $\llbracket\mathsf{isDomi}_i\rrbracket^B$ for each tuple $\llbracket\mathbf{t}_{i}\rrbracket^A\in\llbracket\mathbf{T}\rrbracket^A$.
    
    We first introduce how {\csa} obliviously evaluate $\llbracket\mathsf{isFirstSky}_i\rrbracket^B$ for each tuple $\llbracket\mathbf{t}_{i}\rrbracket^A\in\llbracket\mathbf{T}\rrbracket^A$, i.e. whether $\mathbf{t}_i$ is the 
    skyline tuple.
    Recall that in Algorithm \ref{alg:bFind}, the skyline tuple has the minimum attribute sum $sMin$. so {\main} lets {\csa} obliviously evaluate whether $\mathbf{t}_i$'s attribute
    sum (i.e., $\mathbf{s}^{(k)}[i]$ in the current round $k$) is equal to $sMin$.
    Specifically, {\main} first lets {\csa} securely compare $\llbracket sMin\rrbracket^A$ and $\llbracket\mathbf{s}^{(k)}[i]\rrbracket^A$ by 
    \begin{equation}\label{eq:sigma}
        \llbracket \sigma_{i}\rrbracket^B=\neg\mathsf{SecExt}(\llbracket sMin\rrbracket^A,\llbracket\mathbf{s}^{(k)}[i]\rrbracket^A),
    \end{equation}
    where $\sigma_{i}=1$ indicates $sMin\geq \mathbf{s}^{(k)}[i]$.
    Note that $sMin$ is the minimum value in $\mathbf{s}^{(k)}$, and thus $\sigma_{i}=1$
    means $ \mathbf{s}^{(k)}[i]=sMin $.
    However, since there may be more than one value in $\mathbf{s}^{(k)}$ that is equal to 
    $sMin$, $ \mathbf{s}^{(k)}[i]= sMin $ indicates that $\mathbf{t}_{i}$ \textit{may} 
    be the skyline tuple $\mathbf{t}_{\star}$.
    Recall that in Algorithm \ref{alg:bFind}, {\main} lets {\csa} obliviously fetch the \textit{first} tuple $\llbracket\mathbf{t}_{i}\rrbracket^A$ whose $\llbracket\mathbf{s}[i]\rrbracket^A$ is minimum in  $\llbracket\mathbf{s}\rrbracket^A$ (i.e., $\llbracket sMin \rrbracket^A$) as the skyline tuple $\llbracket\mathbf{t}_{\star}\rrbracket^A$.
    Therefore, {\main} provides a delicate security design to allow {\csa} to only set $\llbracket\mathsf{isFirstSky}_{i}\rrbracket^B=\llbracket 1\rrbracket^B$ for the \textit{first} tuple $\llbracket\mathbf{t}_{i}\rrbracket^A$ that satisfies $\llbracket \mathbf{s}^{(k)}[i]\rrbracket^A=\llbracket sMin \rrbracket^A$ as follows:
    \begin{align}
    &\llbracket\mathsf{isFirstSky}_{i}\rrbracket^B=\llbracket \sigma_{i}\rrbracket^B\otimes \llbracket\neg\mathsf{flag} \rrbracket^B,\label{eq:first_sky_set}\\ 
    &\llbracket\mathsf{flag}\rrbracket^B=\llbracket\mathsf{flag}\rrbracket^B\oplus \llbracket\mathsf{isFirstSky}_{i}\rrbracket^B,\label{eq:flag_set}
    \end{align}
    where $\llbracket\mathsf{flag}\rrbracket^B$ is an auxiliary variable and set as $\llbracket0\rrbracket^B$ at the beginning of the current round.
    $\mathsf{isFirstSky}_{i}=1$ indicates that tuple $\mathbf{t}_{i}$ is the required skyline tuple $\mathbf{t}_{\star}$.

    The correctness is analyzed as follows. 
    $\llbracket\mathsf{flag}\rrbracket^B=\llbracket0\rrbracket^B$ at the beginning.
    When the first $\llbracket \mathbf{s}^{(k)}[i]\rrbracket^A=\llbracket sMin \rrbracket^A$ appears, {\csa} obliviously set $\llbracket \sigma_{i}\rrbracket^B=\llbracket 1\rrbracket^B$.
    Since $\llbracket\neg\mathsf{flag}\rrbracket^B=\llbracket1\rrbracket^B$, {\csa} obliviously set $\llbracket\mathsf{isFirstSky}_{i}\rrbracket^B=\llbracket 1\rrbracket^B$ (i.e., Eq. \ref{eq:first_sky_set}), which marks the required skyline tuple $\mathbf{t}_{\star}$.
    After that,  {\csa} obliviously set $\llbracket\mathsf{flag}\rrbracket^B=\llbracket1\rrbracket^B$, i.e., Eq. \ref{eq:flag_set}, and $\llbracket\mathsf{flag}\rrbracket^B$ remains equal to $\llbracket1\rrbracket^B$ in the following loops.
    $\llbracket\mathsf{flag} \rrbracket^B=\llbracket1\rrbracket^B$ prevents {\csa} from setting  $\llbracket\mathsf{isFirstSky}_{i}\rrbracket^B=\llbracket 1\rrbracket^B$ for other tuples because $\llbracket \neg\mathsf{flag} \rrbracket^B=\llbracket0\rrbracket^B$ in the following loops.

    We then introduce how {\csa} obliviously evaluate $\llbracket\mathsf{isDomi}_i\rrbracket^B$ for each tuple $\llbracket\mathbf{t}_{i}\rrbracket^A\in\llbracket\mathbf{T}\rrbracket^A$, i.e. whether $\mathbf{t}_i$ is a dominated tuple.
   	\revise{According to the definition of dominance (i.e., Definition \ref{def:skyline}), given two tuples $\mathbf{a}$ and $\mathbf{b}$, we say $\mathbf{a}$ dominates $\mathbf{b}$
    if $\forall j$, $\mathbf{a}[j]\leq\mathbf{b}[j]$ and $\exists j$, $\mathbf{a}[j]<\mathbf{b}[j]$.
    Therefore, if $\forall j$, $\mathbf{a}[j]\leq\mathbf{b}[j]$, we have either $\mathbf{a}$ dominates
    $\mathbf{b}$ or $\mathbf{a}$ is identical to $\mathbf{b}$.
    Therefore, {\main} defines an encrypted (binary) label $\llbracket\hat{\delta}_{i}\rrbracket^B$ to mark the above dominance relationship.
    {\csa} obliviously evaluate  $\llbracket\hat{\delta}_{i}\rrbracket^B$ for each tuple $\llbracket\mathbf{t}_{i}\rrbracket^A\in\llbracket\mathbf{T}\rrbracket^A$ by first comparing each attribute of the skyline tuple $\llbracket\mathbf{t}_{\star}\rrbracket^A$ and $\llbracket\mathbf{t}_{i}\rrbracket^A$:
    \begin{equation}\notag
    \llbracket\delta_{i,j}\rrbracket^B=\mathsf{SecExt}(\llbracket\mathbf{t}_i[j]\rrbracket^A,\llbracket\mathbf{t}_{\star}[j]\rrbracket^A),j\in[1,m],
    \end{equation}
    and then aggregating the comparison results to $\llbracket\hat{\delta}_{i}\rrbracket^B$ by 
    \begin{equation}\label{eq:hat_delta}
    \llbracket \hat{\delta}_{i}\rrbracket^B=\llbracket\neg\delta_{i,1}\rrbracket^B\otimes\cdots\otimes\llbracket\neg\delta_{i,m}\rrbracket^B, 
    \end{equation} 
    where $\hat{\delta}_{i}=1$ if and only if $\forall\delta_{i,j}=0,j\in[1,m]$, i.e., $\forall j\in[1,m]$, $\mathbf{t}_{\star}[j]\leq\mathbf{t}_{i}[j]$.
    Therefore, $\hat{\delta}_{i}=1$ if $\mathbf{t}_{i}$ is a dominated tuple or $\mathbf{t}_{i}=\mathbf{t}_{\star}$, and otherwise $\hat{\delta}_{i}=0$.}
    The above process is described at lines \ref{alg4:select_begin}-\ref{alg4:select_end} of Algorithm \ref{alg:bEli}.
    Subsequently, {\csa} securely evaluate $\llbracket\mathsf{isDomi}_{i}\rrbracket^B$ for $\llbracket\mathbf{t}_{i}\rrbracket^A$ by
    \begin{equation}\notag
        \llbracket\mathsf{isDomi}_{i}\rrbracket^B=\llbracket \hat{\delta}_{i}\rrbracket^B\otimes \llbracket \neg\sigma_{i}\rrbracket^B,
    \end{equation} 
    where $\mathsf{isDomi}_{i}=1$ indicates that tuple $\mathbf{t}_{i}$ is a dominated tuple, and $\llbracket \hat{\delta}_{i}\rrbracket^B$ and $\llbracket\sigma_{i}\rrbracket^B$ are calculated by Eq. \ref{eq:hat_delta} and Eq. \ref{eq:sigma}, respectively. 
    We give the correctness analysis as follows.
    Firstly, if and only if  both $\hat{\delta}_{i}=1$ and   $ \neg\sigma_{i}=1$, we have $\mathsf{isDomi}_{i}=1$.
    $\hat{\delta}_{i}=1$ indicates that $\mathbf{t}_{i}$ is a dominated tuple or $\mathbf{t}_{i}=\mathbf{t}_{\star}$, and $ \neg\sigma_{i}=1$ rules out the possibility of $\mathbf{t}_{i}=\mathbf{t}_{\star}$.
    Therefore, if and only if the tuple $\mathbf{t}_{i}$ is a dominated tuple, we have $\mathsf{isDomi}_{i}=1$.

    So far, {\csa} have obliviously marked the skyline tuple by $\mathsf{isFirstSky}_{i}=1$
    and the dominated tuples by $\mathsf{isDomi}_{i}=1$.
    Therefore, {\csa} can securely evaluate $\llbracket\Phi_{i}\rrbracket^B$ for each tuple $\llbracket\mathbf{t}_{i}\rrbracket^A\in\llbracket\mathbf{T}\rrbracket^A$ by Eq. \ref{eq:Phi_i} to obliviously mark whether tuple $\mathbf{t}_{i}$ needs to be filtered out.

	\noindent\textbf{Addressing the second challenge.} We should then consider how to tackle the second challenge, namely, how to allow {\csa} to obliviously filter out tuples which have $\Phi_{i}=1$ underlying the $\llbracket \Phi_{i}\rrbracket^B$ (recall Eq. \ref{eq:Phi_i}).
        A naive method is to let {\csa} directly open each tuple's $ \Phi_{i}$.
        However, such simple method will leak which tuple is the skyline tuple and the dominance relationships, which easily violates the security requirement for access pattern protection.

        Instead, {\main} achieves oblivious tuple filtering via a different strategy.
        Specifically, {\main} lets {\csa} obliviously set $\mathbf{s}^{(k+1)}[i]$ to a pre-set system-wide maximum value $\mathsf{vMAX}$ to mark for filtering if $\Phi_{i}= 1$, and obliviously keep $\mathbf{s}^{(k+1)}[i]$ unchanged if $\Phi_{i}= 0$.
        Formally, {\csa} perform the following:
        \begin{align}\notag
        	\llbracket\mathbf{s}^{(k+1)}[i]\rrbracket^A= \mathsf{MultiBA}(\llbracket\Phi_{i}\rrbracket^B,\llbracket\mathsf{vMAX}\rrbracket^A)+\\ \mathsf{MultiBA}(\llbracket \neg\Phi_{i}\rrbracket^B,\llbracket\mathbf{s}^{(k)}[i]\rrbracket^A).\notag
        \end{align}
        The secret sharing of $\mathsf{vMAX}$ can be prepared by {\csa} offline.
        Note that $\mathbf{s}^{(k+1)}[i]=\mathsf{vMAX}$ will prevent {\csa} from selecting $\mathbf{t}_{i}$ (or $\mathbf{p}_{i}$) as the skyline tuple in the following rounds.
        Therefore, the filtering strategy will not degrade the skyline query accuracy.

    \subsection{Putting Things Together}
    
   \label{subsec:secure_main}
    
    In this section, we introduce how to synthesize the above three secure components to enable secure skyline query processing over encrypted cloud databases in {\main}.
    We first encapsulate them as follows: 
    \begin{itemize}
    	\item Secure database mapping $\llbracket\mathbf{T}\rrbracket^A=\mathsf{secMap}(\llbracket\mathbf{P}\rrbracket^A,$ $\llbracket\mathbf{q}\rrbracket^A)$, which inputs the encrypted original database $\llbracket\mathbf{P}\rrbracket^A$ and skyline query request $\llbracket\mathbf{q}\rrbracket^A$, and then outputs the initial encrypted mapped database $\llbracket\mathbf{T}\rrbracket^A$.

    	\item  Secure skyline fetching $(\llbracket  sMin\rrbracket^A$, $\llbracket\mathbf{t}_{\star}\rrbracket^A$, $\llbracket\mathbf{p}_{\star}\rrbracket^A)=\mathsf{secFetch}(\llbracket\mathbf{s}^{(k)}\rrbracket^A, \llbracket\mathbf{T}\rrbracket^A, \llbracket\mathbf{P}\rrbracket^A)$, which inputs the encrypted attribute sum vector $\llbracket\mathbf{s}^{(k)}\rrbracket^A$, mapped database $\llbracket\mathbf{T}\rrbracket^A$, and the original database $\llbracket\mathbf{P}\rrbracket^A$ output from the previous round, and then outputs the encrypted minimum attribute sum $\llbracket  sMin\rrbracket^A\in \llbracket\mathbf{s}^{(k)}\rrbracket^A$, the skyline tuples $\llbracket\mathbf{t}_{\star}\rrbracket^A$ and $\llbracket\mathbf{p}_{\star}\rrbracket^A$.

    	\item Secure skyline and dominated tuples filtering $\llbracket\mathbf{s}^{(k+1)}\rrbracket^A= \mathsf{secFilt}(\llbracket\mathbf{T}\rrbracket^A,$ $ \llbracket\mathbf{t}_{\star}\rrbracket^A,\llbracket  sMin\rrbracket^A,\llbracket\mathbf{s}^{(k)}\rrbracket^A)$, which inputs the encrypted mapped database $\llbracket\mathbf{T}\rrbracket^A$, minimum sum $\llbracket  sMin\rrbracket^A$, and attribute sum vector $\llbracket\mathbf{s}^{(k)}\rrbracket^A$, and then outputs the updated attribute sum vector $\llbracket\mathbf{s}^{(k+1)}\rrbracket^A$.

    \end{itemize}
     Algorithm \ref{alg:overview} gives the complete construction for secure skyline query processing in {\main}, which is the secure instantiation of Algorithm \ref{alg:skyline_plaintext} and relies on the coordination of the above three secure components.
    The only challenge in the design of Algorithm \ref{alg:overview} is how to allow {\csa} to decide whether to terminate the secure search process without leaking other information.
    Our solution is to let {\csa} first securely compare $\llbracket  sMin\rrbracket^A$ and $ \mathsf{vMAX}$:
   \begin{equation}\notag
    \llbracket \mathsf{isStop}\rrbracket^B=\neg\mathsf{SecExt}(\llbracket sMin\rrbracket^A,\llbracket\mathsf{vMAX}\rrbracket^A),
   \end{equation}
    and then open the flag $\mathsf{isStop}$, where $\mathsf{isStop}=1$ indicates that the smallest element in $\llbracket\mathbf{s}^{(k)}\rrbracket^A$ is $\mathsf{vMAX}$.
    Therefore, {\csa} can know that all elements in $\llbracket\mathbf{s}^{(k)}\rrbracket^A$ have been filtered out, and then terminate the search process.

        \begin{algorithm}[t]
       	\caption{The Complete Construction for Secure Skyline Query Processing in {\main}}
       	\label{alg:overview}
       	\begin{algorithmic}[1]
       		\REQUIRE The encrypted database $\llbracket{\mathbf{P}}\rrbracket^A$ and skyline query $\llbracket\mathbf{q}\rrbracket^A$.
       		
       		\ENSURE The encrypted resulting set of skyline tuples $\llbracket\mathcal{SK}_{\mathbf{q}}\rrbracket^A$.
       		
       		\STATE Initialization: $\llbracket\mathcal{SK}_{\mathbf{q}}\rrbracket^A=\emptyset, \llbracket\mathbf{s}\rrbracket^A=\llbracket\mathbf{0}\rrbracket^A,  $ \revise{$\mathsf{isStop}=0$}.
       		
       		\STATE $\llbracket\mathbf{T}\rrbracket^A= \mathsf{secMap}(\llbracket\mathbf{P}\rrbracket^A,\llbracket\mathbf{q}\rrbracket^A)$.\label{line:oMapping}
       		\STATE $\llbracket\mathbf{s}^{(0)}[i]\rrbracket^A=\sum_{j=1}^{m}\llbracket\mathbf{t}_i[j]\rrbracket^A$, for $i\in[1,n]$. // $n$ is the number of tuples; $m$ is the dimension of tuples.\label{line:oSum}
       		\STATE $k=0$.
       		\WHILE{$\neg\mathsf{isStop}$}
       		\STATE $(\llbracket  sMin\rrbracket^A,\llbracket\mathbf{t}_{\star}\rrbracket^A,\llbracket\mathbf{p}_{\star}\rrbracket^A)=$\\ $\mathsf{secFetch}(\llbracket\mathbf{s}^{(k)}\rrbracket^A,\llbracket\mathbf{T}\rrbracket^A, \llbracket\mathbf{P}\rrbracket^A)$.\label{line:oFinding}
       		\STATE $\llbracket \mathsf{isStop}\rrbracket^B=\neg\mathsf{SecExt}(\llbracket  sMin\rrbracket^A,\llbracket\mathsf{vMAX}\rrbracket^A)$.\label{line:OverviewMSB}
       		\STATE {\csa} open the flag $\mathsf{isStop}$ to decide whether to stop the process.\label{line:OverviewComm}
       		\STATE $\llbracket\mathcal{SK}_{\mathbf{q}}\rrbracket^A.append(\llbracket\mathbf{p}_{\star}\rrbracket^A)$.
       		
       		\STATE $\llbracket\mathbf{s}^{(k+1)}\rrbracket^A= \mathsf{secFilt}(\llbracket\mathbf{T}\rrbracket^A, \llbracket\mathbf{t}_{\star}\rrbracket^A,\llbracket  sMin\rrbracket^A,\llbracket\mathbf{s}^{(k)}\rrbracket^A)$.\label{line:oEli}
       		\STATE $k++$.
       		
       		\ENDWHILE
       		  \RETURN  $\llbracket\mathcal{SK}_{\mathbf{q}}\rrbracket^A$.
       	\end{algorithmic}
       \end{algorithm}

    \subsection{Complexity Analysis}
     \label{subsec:complexity_analysis}
    %
    It is noted that the performance of {\main} is dominated by three main components: 1) secure database mapping ($\mathsf{secMap}$), 2) secure skyline fetching ($\mathsf{secFetch}$), and 3) secure skyline and dominated tuples filtering ($\mathsf{secFilt}$).
    In practice, the cost of these components is dominated by the secure MSB extraction operation $\mathsf{SecExt(\cdot)}$.
    Therefore, we first separately analyze their complexities by counting $\mathsf{SecExt(\cdot)}$ during their execution.
    For Algorithm \ref{alg:Mapping}, the cost of $\mathsf{secMap}$ is dominated by $n\cdot m$ secure MSB extraction operations $\mathsf{SecExt(\cdot)}$.
    For Algorithm \ref{alg:bFind}, $\mathsf{secFetch}$ requires $n-1$ secure MSB extraction operations $\mathsf{SecExt(\cdot)}$.
    For Algorithm \ref{alg:bEli}, $\mathsf{secFilt}$ requires $n\cdot m+n$ secure MSB extraction operations $\mathsf{SecExt(\cdot)}$.

    Then we analyze the overall complexity of invoking $\mathsf{SecExt(\cdot)}$ in the complete protocol shown in Algorithm \ref{alg:overview}. 
    %
    %
    Note that it is dominated by the $\mathsf{While}$ loop, which terminates when all tuples in $\llbracket T\rrbracket^A$ are filtered out, as indicated by the Boolean flag $\mathsf{isStop}$.
    %
    The number of loops is $k+1$ where $k$ is the number of skyline tuples returned for the skyline query, i.e., the size of $\llbracket\mathcal{SK}_{\mathbf{q}}\rrbracket^A$.
    %
    %
    In addition, the computation of $\llbracket \mathsf{isStop}\rrbracket^B$ in each loop requires a secure MSB extraction operation $\mathsf{SecExt(\cdot)}$ and the $(k+1)$-th loop early stops at line 8.
    Therefore, we can conclude that the overall execution of our protocol requires $n\cdot m+k\cdot n\cdot(2+m)+n$ secure MSB extraction operations  $\mathsf{SecExt(\cdot)}$.

    \section{Security Analysis}
    \label{sec:security}
    We now analyze the security of {\main}. \revise{Our analysis follows the standard ideal/real world paradigm \cite{lindell2017how}}.
    We first define the ideal functionality $\mathcal{F}$ for the secure skyline query processing:
    \begin{itemize}
        \item \textbf{Input.} The data owner provides to $\mathcal{F}$  the database $\mathbf{P}$ and a client submits a query $\mathbf{q}$. 
        \item \textbf{Computation.} After receiving $\mathbf{P}$ and $\mathbf{q}$, $\mathcal{F}$ retrieves the skyline tuples $\mathcal{SK}_{\mathbf{q}}$ of $\mathbf{P}$ with respect to $q$.
        \item \textbf{Output.} $\mathcal{F}$ returns $\mathcal{SK}_{\mathbf{q}}$ to the client.
    \end{itemize}
    Let $\prod$ represent a protocol for secure skyline query processing that realizes the ideal functionality $\mathcal{F}$. The security of $\prod$ is formally defined as follows:

    \begin{definition}
        \label{def:secAna}
        Let $\mathcal{A}$ be an adversary who has the view of a corrupted server during the execution of $\prod$.
        Let $\mathsf{View}^{\mathsf{Real}}_{\prod(\mathcal{A})}$ denote $\mathcal{A}$'s view in the real world.
        %
        %
        We say that $\prod$ is secure in the semi-honest and non-colluding setting, if for $\forall$ PPT adversary, $\exists$ a PPT simulator $\mathcal{S}$ s.t. $\mathsf{View}_{\prod(\mathcal{A})}^{\mathsf{Real}}\mathop \approx \limits\mathsf{View}_{\mathcal{S}}^{\mathsf{Ideal}}$. That is, the simulator $\mathcal{S}$ can simulate a view for the adversary, which is indistinguishable from its view in the real-world. 
    \end{definition}

    \begin{theorem}
        \label{th:secSkyline}
        In the semi-honest and non-colluding adversary model, {\main} can securely realize the ideal functionality $\mathcal{F}$ according to \textit{Definition} \ref{def:secAna}.
    \end{theorem}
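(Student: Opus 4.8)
The plan is to establish Theorem~\ref{th:secSkyline} through the standard simulation-based argument of Definition~\ref{def:secAna}: since the data owner and the client are trusted, it suffices to handle a single corrupted server, and by symmetry I will treat one of $C_1, C_2$. I will construct a PPT simulator $\mathcal{S}$ that, given only what that server is entitled to in the ideal world --- nothing beyond the public parameters $n$, $m$, $l$ and the number $k=|\mathcal{SK}_{\mathbf{q}}|$ of returned skyline tuples (the sole benign leakage, exposed by the loop count of Algorithm~\ref{alg:overview}) --- produces a transcript computationally indistinguishable from the server's real-world view $\mathsf{View}_{\prod(\mathcal{A})}^{\mathsf{Real}}$. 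The whole argument rides on the sequential modular-composition theorem for semi-honest protocols \cite{lindell2017how}: {\main} is a straight-line composition of (i) additive secret-sharing of the inputs, (ii) the secure MSB-extraction $\mathsf{SecExt}$ of \cite{liu2021medisc}, (iii) the cross-domain multiplication $\mathsf{MultiBA}$ (the OT-based construction adapted from \cite{mohassel2018aby}) together with Beaver-triple arithmetic multiplication, and (iv) purely local operations ($\pm$, $\oplus$, $\otimes$ with public constants, and the NOT flip), each of which is itself semi-honest-secure and hence comes with its own simulator.

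Concretely, I would first decompose the corrupted server's view into three parts. Part~(a): the initial incoming shares $\langle\mathbf{P}\rangle^A_b$, $\langle\mathbf{q}\rangle^A_b$ and the offline material (Beaver triples and $\langle\mathsf{vMAX}\rangle^A$); by the definition of additive sharing each of these is uniform over $\mathbb{Z}_{2^l}$ (or $\mathbb{Z}_2$) and independent of the underlying secret, so $\mathcal{S}$ simply samples uniform strings of the appropriate lengths. Part~(b): the protocol transcript generated while executing $\mathsf{secMap}$ (Alg.~\ref{alg:Mapping}), $\mathsf{secFetch}$ (Alg.~\ref{alg:bFind}) and $\mathsf{secFilt}$ (Alg.~\ref{alg:bEli}); here I would observe that in every invocation the inputs fed to $\mathsf{SecExt}$ and $\mathsf{MultiBA}$ are themselves fresh secret shares and their outputs are consumed only as secret shares, never reconstructed, so $\mathcal{S}$ produces this part by invoking, in the same order, the sub-protocol simulators guaranteed by \cite{liu2021medisc,mohassel2018aby} and the standard Beaver-multiplication simulator (whose revealed masks $e,f$ are uniform because $u,v$ are). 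Part~(c): the only value ever opened in the clear, namely the flag $\mathsf{isStop}$ revealed in each iteration of the $\mathsf{While}$ loop of Alg.~\ref{alg:overview} (line~\ref{line:OverviewComm}); the opened sequence is deterministically $0,\dots,0,1$ of length $k+1$, so $\mathcal{S}$ outputs exactly this sequence using the leaked value $k$.

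The indistinguishability claim is then discharged by a hybrid walk $H_0,H_1,\dots,H_N$ that starts from the real view and, one sub-protocol invocation at a time, swaps a real transcript segment for its simulated counterpart; consecutive hybrids are indistinguishable by the security of that sub-protocol, and a final step replaces the real incoming shares and offline material by fresh uniform randomness, which is a perfect swap. A bookkeeping point I would make explicit is that \emph{no} intermediate quantity other than $\mathsf{isStop}$ is ever reconstructed: $\Phi_i$, $\sigma_i$, $\mathsf{isFirstSky}_i$, $\mathsf{isDomi}_i$, $\hat{\delta}_i$, the dominance bits $\delta_{i,j}$, the attribute sums $\mathbf{s}^{(k)}[i]$, $sMin$, the index of the fetched minimum, and the tuples $\mathbf{t}_\star,\mathbf{p}_\star$ all remain secret-shared throughout Algorithms~\ref{alg:Mapping}--\ref{alg:overview}; this is precisely what yields the claimed protection of the content of $\mathbf{P}$, $\mathbf{q}$, $\mathcal{SK}_{\mathbf{q}}$, of the dominance relationships, of the number of tuples each skyline tuple dominates, and of the search/access patterns.

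The main obstacle is the careful handling of Part~(c) together with the composition bookkeeping: one must verify that the loop-control flow of Alg.~\ref{alg:overview} is reproducible by $\mathcal{S}$ from $k$ alone, that the revealed $\mathsf{isStop}$ sequence leaks nothing beyond $k$, and that the non-colluding assumption genuinely confines $\mathcal{A}$ to a single server's view so that the ``each share/message is uniform to one party'' reasoning applies everywhere --- in particular for the OT-based step inside $\mathsf{MultiBA}$, where I would lean on sender/receiver security of the underlying oblivious transfer to argue that the corrupted party's messages ($m_0,m_1$ on the sender side, or the single chosen $m_\mu$ on the receiver side) are simulatable. Everything else reduces mechanically to the security of additive secret sharing plus the cited secure components under sequential composition.
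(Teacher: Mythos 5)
Your proposal is correct and follows essentially the same strategy as the paper's proof: a simulation argument under sequential modular composition, exploiting the symmetry of $C_1$ and $C_2$, reducing to per-component simulators for $\mathsf{secMap}$, $\mathsf{secFetch}$, and $\mathsf{secFilt}$, and arguing uniformity of the incoming shares and of the $\mathsf{MultiBA}$ messages $m_0,m_1$ from the independence of the masking randomness.

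One difference is worth flagging in your favor: your Part~(c) explicitly simulates the only value ever reconstructed in the clear --- the $\mathsf{isStop}$ sequence opened at line~\ref{line:OverviewComm} of Algorithm~\ref{alg:overview} --- and correctly identifies that this forces the simulator to be given $k=|\mathcal{SK}_{\mathbf{q}}|$ as benign leakage, since the revealed sequence is deterministically $0,\dots,0,1$ of length $k+1$. The paper's proof analyzes only the three secret-shared components and never addresses this opening, even though it is the one place where the servers learn something beyond uniformly random shares; strictly speaking the ideal functionality (or the simulator's input) must be augmented with the result cardinality for the simulation to go through, exactly as you do. Your hybrid bookkeeping and the explicit observation that no other intermediate ($\Phi_i$, $\sigma_i$, $sMin$, etc.) is ever reconstructed make the argument tighter than the paper's, while buying the same conclusion.
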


    \begin{proof}
    Recall that in the framework of {\main}, i.e., Algorithm \ref{alg:overview}, which consists of several components: 1) secure database mapping ($\mathsf{secMap}$); 2) secure skyline fetching ($\mathsf{secFetch}$); 3) secure skyline and dominated tuples \revise{ filtering} (\revise{ $\mathsf{secFilt}$}).
     Since each of them is invoked in order as per the processing pipeline and their inputs and outputs are secret shares, we can conclude that {\main} is secure if the simulator for each component exists \cite{canetti2000security,katz2005handling,curran2019procsa}.
    We use $Sim^{C_{i}}_{\mathtt{X}}$ to represent the simulator which generates $C_{i}$'s view in the execution of component $\mathtt{X}$ on corresponding input and output.
   It is noted that the roles of $C_1$ and $C_2$ in these components are symmetric. So it suffices to analyze the existence of simulators for $C_1$.

    \begin{itemize}
        \item $Sim^{C_1}_{\mathsf{secMap}}$. 
        It is noted that $\mathsf{secMap}$ (i.e.,  Algorithm \ref{alg:Mapping}) consists of three meta operations, i.e., secure MSB extraction (line \ref{line:MapMSB}), secure bit flipping (line \ref{line:MapNOT}), and secure multiplication between a binary secret-shared value and an arithmetic secret-shared value (line \ref{line:MapAMB}). 
        Since these operations are invoked in turn and their inputs are secret shares, we analyze the existence of their simulators in turn.
        Since the secure MSB extraction consists of basic binary secret sharing operations (i.e., AND $\otimes$ and XOR $\oplus$), its simulator clearly exists.
        Note that the secure bit flipping operation only requires local computation and $C_1$ receives nothing during its execution.
        Therefore, its simulator clearly exists.
        We then analyze the existence of the simulator for the secure multiplication between a binary secret-shared value $\llbracket x\rrbracket^B$ and an arithmetic secret-shared value $\llbracket y\rrbracket^A$.
        %
        %
        %
        %
        %
        %
        %
        It is noted that we only need to analyze the case where $C_{1}$ acts as the receiver, because in the case where $C_{1}$ acts as the sender, $C_{1}$ receives nothing.
        At the beginning of the operation, $C_{1}$ has $\langle x\rangle^B_1$ and $\langle y\rangle^A_1$, and later receives two messages $m_\mu:=(\mu\oplus\langle x\rangle^B_2)\cdot\langle y\rangle^A_2-r_2,\mu\in\{0,1\}$ from $C_{2}$.
        Therefore, we need to prove that the messages are uniformly random in the view of $C_1$.
        Note that the random value $r_2$ generated by $C_2$ is uniformly random in the view of $C_1$.
        This implies that $m_{\{1,2\}}$ are also uniformly random in $C_1$'s view since $r_2$ is independent of other values used in the generation of $m_{\{1,2\}}$ \cite{araki2016high}.
            %
            %
            %
            %
            %
        %
        Therefore, the simulator $Sim^{C_1}_{\mathsf{secMap}}$ exists.
        
        \item $Sim^{C_1}_{\mathsf{secFetch}}$. 
        It is noted that $\mathsf{secFetch}$ (i.e., Algorithm \ref{alg:bFind}) consists of secure MSB extraction, secure bit flipping, secure multiplication between a binary secret-shared value and an arithmetic secret-shared value, and basic secret sharing operations.
        Meanwhile, these operations are invoked in turn and their inputs are secret shares.
        Therefore, based on the above analysis, the simulator $Sim^{C_1}_{\mathsf{secFetch}}$ exists.

        %
        \item $Sim^{C_1}_{\mathsf{secFilt}}$. 
        Similarly, $Sim^{C_1}_{\mathsf{secFilt}}$ exists, since the meta operations of \revise{ $\mathsf{secFilt}$} (i.e., Algorithm \ref{alg:bEli}) are same as $\mathsf{secMap}$ and $\mathsf{secFetch}$, and they as are invoked in turn and their inputs are secret shares.
    \end{itemize}
    The proof of Theorem \ref{th:secSkyline} is completed.
\end{proof}

      We now explicitly analyze why {\main} can hide search access patterns as follows.
      \begin{itemize}
      	\item \textbf{Hiding the search pattern.} 
      	Given an encrypted skyline query request $\llbracket \mathbf{q}\rrbracket^A$, each cloud server $C_i,i\in\{1,2\}$ only receives the share $\langle \mathbf{q} \rangle^A_i$.
      	According to the security of additive secret sharing, it is ensured that encrypting the same query multiple times will produce different secret shares that are indistinguishable from uniformly random values. 
      	Therefore, given the security of additive secret sharing  \cite{demmler2015aby}, {\csa} cannot determine whether a new skyline query has been issued before.
      	Therefore, {\main} can hide the search pattern.
      	
      	\item \textbf{Hiding the access pattern.} 
      	The access pattern in fact indicates whether a tuple in the original database $\llbracket \mathbf{P}\rrbracket^A$ or the mapped database $\llbracket \mathbf{T}\rrbracket^A$  is a skyline tuple. That is, it refers to which tuples will appear in the query result $\llbracket\mathcal{SK}_{\mathbf{q}}\rrbracket^A$.
      	Since the skyline tuples are obliviously fetched in $\mathsf{secFetch}$ and obliviously filtered out in $\mathsf{secFilt}$, {\csa} cannot know which tuples are the skyline tuples.
      	 Therefore, {\main} can hide the access pattern.
      \end{itemize}

    \section{Experiments}
    \label{sec:expe}

    \subsection{Experiment Setup}
    We implement our protocol in C++. 
    All experiments are conducted on a machine with 8 AMD Ryzen 7 5800H CPU cores and 16 GB RAM running 64-bit Windows 10. 
    In our experiments, the two cloud servers are simulated by threads and executed in parallel on the same machine.
    The network delay is set to 1 ms.
    Similar to the previous works \cite{liu2019secure,ding2021efficient}, we use three synthetic datasets and a real-world NBA dataset. 
    Specifically, we generate independent (INDE), correlated (CORR), and anti-correlated (ANTI) datasets following \cite{liu2019secure}.
    We also build a real-world dataset about NBA players based on the data from the Kaggle dataset\footnote{\url{https://www.kaggle.com/drgilermo/nba-players-stats/data}.}, where each player has six attributes: minutes, points, rebounds, assists, blocks, and steals.
    The results reported in our experiments are the average over 100 skyline queries unless otherwise stated.

    \subsection{Evaluation on Accuracy}
    We first report the accuracy of {\main} to demonstrate the effectiveness of our design.
    Specifically, we first implement the plaintext dynamic skyline query algorithm (i.e., Algorithm \ref{alg:skyline_plaintext}).
    Then, over different datasets, we randomly generate 1000  skyline queries and use the plaintext algorithm and {\main} to search the skyline tuples with respect to these skyline queries.
    We use the skyline tuples output by the plaintext algorithm as the baseline to evaluate the accuracy of {\main}.
    If the skyline query results returned by {\main} exactly match that returned by the plaintext baseline, there is no accuracy loss so the accuracy is measured to be $100\%$. 
    The experiment results on different datasets are summarized in Table \ref{table:Accu}, where $n$ represents the number of tuples and $m$ represents the number of dimensions.
    It can be observed that {\main} outputs exactly the same skyline tuples as the plaintext algorithm, which validates the effectiveness of {\main}.

    \begin{table}[t!]
    \centering
        \caption{Accuracy of {\main} over Plaintext Baseline on Different Datasets}
        \label{table:Accu}
        \begin{tabular}{ccccc}  
            \toprule
            Dataset ($n$=1000, $m$=6)&CORR&INDE&ANTI&NBA\\
            \midrule
            Accuracy&100\%&100\%&100\%&100\%\\
            \bottomrule
        \end{tabular}
    \end{table}

    \subsection{Evaluation on Performance}

        \subsubsection{Evaluation on Query Latency}
        
         \begin{figure}[!t]
        	\centering
        	\begin{minipage}[t]{0.49\linewidth}
        		\centering{\includegraphics[width=\linewidth]{./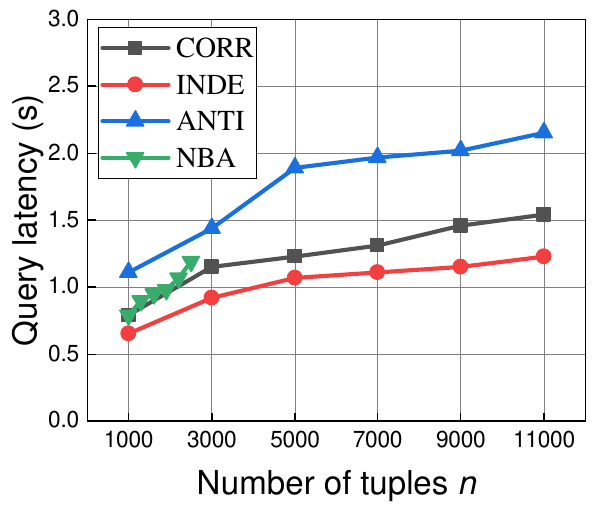}}
        		\caption{Query latency on different datasets, for varying number of tuples $n$ (with the number of dimensions $m=2$).}
        		\label{fig:ParaN}
        	\end{minipage}
        	\begin{minipage}[t]{0.49\linewidth}
        		\centering{\includegraphics[width=\linewidth]{./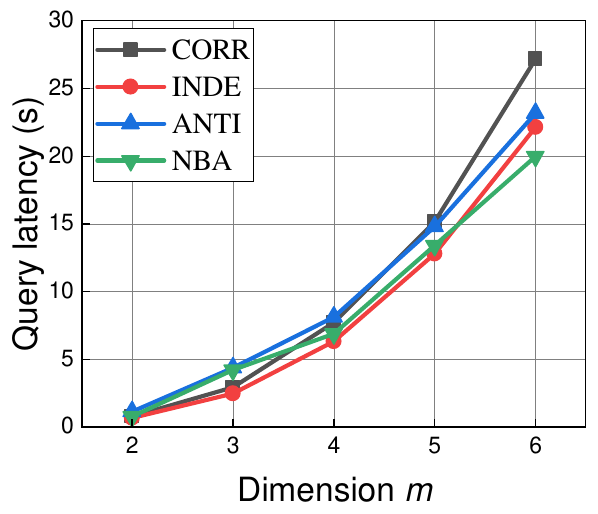}}
        		\caption{Query latency on different datasets, for varying number of dimensions $m$ (with the number of tuples $n = 1000$).}
        		\label{fig:ParaM}
        	\end{minipage}
        \end{figure}
        
        \label{sec:exp_Eval_Comp}
        We now examine the query latency of {\main}. That is, given an encrypted skyline query, we evaluate how long it takes {\csa} to obliviously execute dynamic skyline search on the encrypted database and output encrypted skyline tuples.
        We start with evaluating {\main} on different datasets with the number of dimensions $m=2$, for varying the number of tuples $n$, and summarize the results in Fig. \ref{fig:ParaN}.
        It is noted that since the NBA dataset only has 2500 tuples, the curve about its results is shorter than other datasets.
        From Fig. \ref{fig:ParaN}, it can be observed that under the same $m$, the query latency is small even with $n$ ranging from $1000$ to $11000$.
        %
        \revise{Specifically, as $n$ increases from $1000$ to $11000$, the query latency on different datasets with $m=2$ increases from about $1$  to $2.15$ seconds.
        Moreover, it can be observed that the query latency on ANTI datasets is larger than others.
        The reason is that the dataset tuples in ANTI datasets show weaker correlation, and thus skyline queries on them produce more skyline tuples, which require more rounds of secure skyline tuples search.}
        Then, over different datasets and with the number of tuples $n=1000$, we evaluate {\main} for varying the number of dimensions $m\in\{2,3,4,5,6\}$, and summarize the results in Fig. \ref{fig:ParaM}.
        \revise{It can be observed that under the same $n$, as $m$ increases, the query latency grows quickly.
        Specifically, as $m$ increases from $2$ to $6$, the query latency on different datasets with $n=1000$ increases from about $1$ to $27$ seconds.
      	}

        \subsubsection{Evaluation on Communication Performance}

        We now examine the online communication performance of {\main}. That is, given an encrypted skyline query, the amount of data communicated between {\csa} to conduct the secure skyline query processing as per our design and output encrypted skyline tuples.
        Note that we use the same experiment setting as that in Section \ref{sec:exp_Eval_Comp}, and summarize the results in Fig. \ref{fig:Comm_vary_n} and Fig. \ref{fig:Comm_vary_m}.
        %
        %
        \revise{According to Fig. \ref{fig:Comm_vary_n}, as $n$ increases from $1000$ to $11000$, the communication cost over different datasets with $m=2$ increases from about $6$ to $144$ MB;.
        According to Fig. \ref{fig:Comm_vary_m}, as $m$ increases from $2$ to $6$, the communication cost over different datasets with $n=1000$ increases from about $6$ to $524$ MB.
        So the number of dimensions $m$ heavily affects the communication cost.
        }

  \begin{figure}[!t]
               	\centering
               	\begin{minipage}[t]{0.49\linewidth}
               		\centering{\includegraphics[width=\linewidth]{./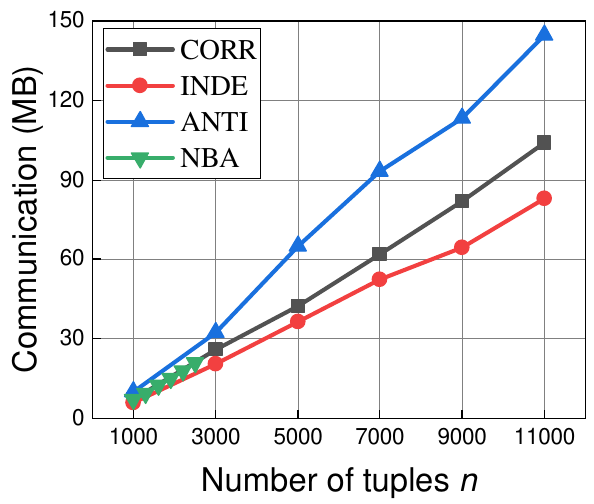}}
               		\caption{Communication cost for varying $n$, over different datasets (with $m=2$).}
               		\label{fig:Comm_vary_n}
               	\end{minipage}
               	\begin{minipage}[t]{0.49\linewidth}
               		\centering{\includegraphics[width=\linewidth]{./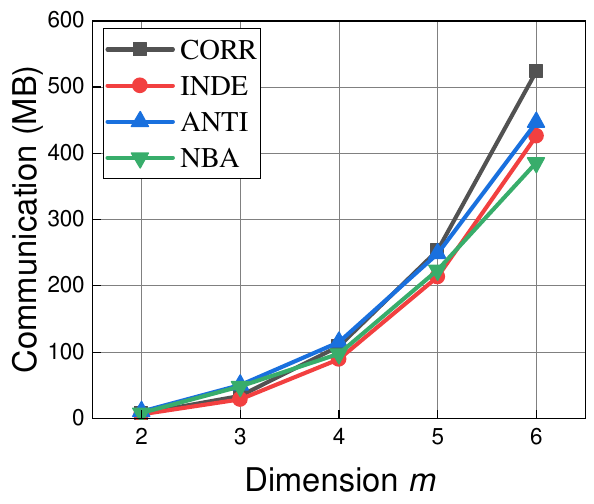}}
               		\caption{Communication cost for varying number of dimensions $m$, over different datasets (with $n = 1000$).}
               		\label{fig:Comm_vary_m}
               	\end{minipage}
               \end{figure}
  
    \subsection{Scalability Evaluation}

    To demonstrate the scalability of {\main} on large-scale datasets, we now report the computation cost of {\main} on larger datasets and different numbers of threads.
    Specifically, we first evaluate {\main} on different datasets under $m=2$ and single thread, for varying number of tuples $n\in\{2\times10^{5},3\times10^{5},4\times10^{5},5\times10^{5},6\times10^{5}\}$, and summarize the results in Fig. \ref{fig:LargeN}.
    It can be observed that even on $6\times10^{5}$ tuples, the query latency is still on the order of seconds, which should be tolerable for the client.
    We then evaluate {\main} on the CORR dataset under $m=2$, for varying number of tuples $n\in\{2\times10^{5},3\times10^{5},4\times10^{5},5\times10^{5},6\times10^{5}\}$ and the number of threads $\chi \in\{1,2,4\}$.
    The results are summarized in Fig. \ref{fig:Pall}.
    To reduce the time cost, we use the data partitioning method \cite{liu2019secure} to support parallel execution of our protocol.
    In the experiment, we divide the dataset into $\chi$ sub-datasets and distribute them to $\chi$ sub-threads.
    Each sub-thread runs {\main} protocol independently to compute the skyline tuples of the sub-dataset and sends the candidate set of skyline tuples to a main thread.
    After receiving the results from any two sub-threads, the main thread merges them into a new dataset and assigns it to an unoccupied
    sub-thread to securely search the skyline tuples.
    This process is repeated until the encrypted skyline tuples of the last merged
    dataset has been securely found, and these tuples are the ultimate encrypted skyline tuples returned to the client.

  \begin{figure}[!t]
    	\centering
    	\begin{minipage}[t]{0.49\linewidth}
    		\centering{\includegraphics[width=\linewidth]{./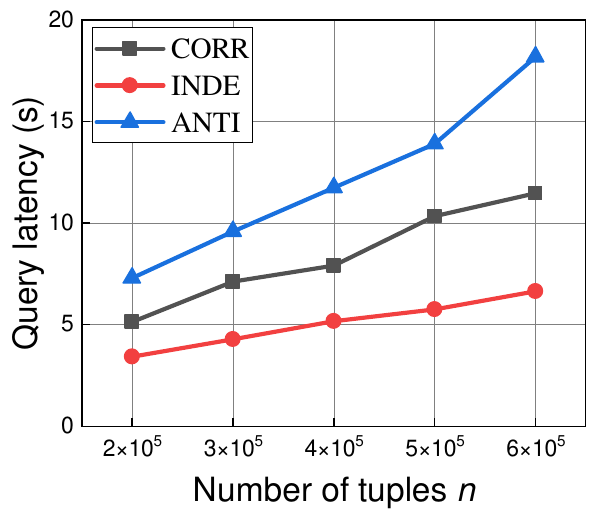}}
    		\caption{Query latency with $m=2$ and single thread, for varying large number of tuples $n$.}
    		\label{fig:LargeN}
    	\end{minipage}
    	\begin{minipage}[t]{0.49\linewidth}
    		\centering{\includegraphics[width=\linewidth]{./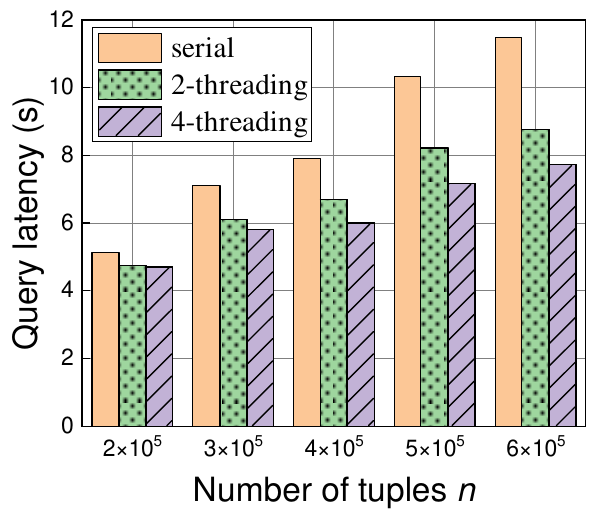}}
    		\caption{Query latency with $m=2$, for varying large number of tuples $n$ and number of threads $\chi$.}
    		\label{fig:Pall}
    	\end{minipage}
    \end{figure}

\subsection{Comparison to State-of-the-art Prior Works}

As reported in the state-of-the-art prior works, the protocols FSSP \cite{liu2019secure} and SMSQ \cite{ ding2021efficient} run secure skyline queries over small-scale datasets (e.g., CORR, INDE, and ANTI with size $n=1000, m=2$) in at least 1000 seconds and 100 seconds\footnote{Results are deduced from the figures in their papers since they do not give the exact values within text.}, respectively.
Meanwhile, we note that they consider the query latency to be the sum of computation time and memory copying time between threads, and do not consider network latency.
In contrast, even considering the network latency, {\main} only requires \textbf{0.79} seconds on CORR dataset, \textbf{0.65} seconds on INDE dataset, and \textbf{1.11} seconds on ANTI dataset, which is $\mathbf{90\sim 154\times}$ and $\mathbf{901}\sim\mathbf{1538\times}$ faster than SMSQ \cite{ding2021efficient} and FSSP \cite{liu2019secure}, respectively.
For the larger datasets (e.g., CORR, INDE, and ANTI with size $n=11000, m=2$), FSSP and SMSQ require at least 10000 seconds and 1000 seconds, respectively.
In contrast, even considering the network latency, {\main} only requires \textbf{1.54} seconds on CORR dataset, \textbf{1.23} seconds on INDE dataset, and \textbf{2.15} seconds on ANTI dataset, which is $\mathbf{465\sim 813\times}$ and $\mathbf{4651\sim 8130\times}$ better than SMSQ \cite{ding2021efficient} and FSSP \cite{liu2019secure}, respectively.

\section{Conclusion}
\label{sec:conclusion}
In this paper, we design, implement, and evaluate {\main}, a new system framework enabling fast privacy-preserving skyline query over outsourced encrypted cloud databases.
{\main} is fully based on the lightweight secret sharing technique, and is derived from a delicate synergy of three proposed secure components, including secure database mapping, secure skyline fetching, and secure skyline and dominated tuples filtering.
Extensive experiments over multiple datasets show that {\main} greatly improves upon state-of-the-art prior works \cite{liu2019secure,ding2021efficient} in query latency, with up to $8130\times$ improvement over FSSP \cite{liu2019secure} and up to $813\times$ improvement over SMSQ \cite{ding2021efficient}.


\section*{Acknowledgement}

This work was supported in part by the Guangdong Basic and Applied Basic Research Foundation under Grant 2021A1515110027, in part by the Shenzhen Science and Technology Program under Grants RCBS20210609103056041 and JCYJ20210324132406016, in part by the National Natural Science Foundation of China under Grant 61732022, in part by the Guangdong Provincial Key Laboratory of Novel Security Intelligence Technologies under Grant 2022B1212010005, in part by the Research Grants Council of Hong Kong under Grants CityU 11217819, 11217620, RFS2122-1S04, N\_CityU139/21, C2004-21GF, R1012-21, and R6021-20F, and in part by the Shenzhen Municipality Science and Technology Innovation Commission under Grant SGDX20201103093004019.


\bibliographystyle{IEEEtran}
\bibliography{my}

\begin{thebibliography}{10}
\providecommand{\url}[1]{#1}
\csname url@samestyle\endcsname
\providecommand{\newblock}{\relax}
\providecommand{\bibinfo}[2]{#2}
\providecommand{\BIBentrySTDinterwordspacing}{\spaceskip=0pt\relax}
\providecommand{\BIBentryALTinterwordstretchfactor}{4}
\providecommand{\BIBentryALTinterwordspacing}{\spaceskip=\fontdimen2\font plus
\BIBentryALTinterwordstretchfactor\fontdimen3\font minus
  \fontdimen4\font\relax}
\providecommand{\BIBforeignlanguage}[2]{{%
\expandafter\ifx\csname l@#1\endcsname\relax
\typeout{** WARNING: IEEEtran.bst: No hyphenation pattern has been}%
\typeout{** loaded for the language `#1'. Using the pattern for}%
\typeout{** the default language instead.}%
\else
\language=\csname l@#1\endcsname
\fi
#2}}
\providecommand{\BIBdecl}{\relax}
\BIBdecl

\bibitem{QinW0018}
Z.~Qin, J.~Weng, Y.~Cui, and K.~Ren, ``Privacy-preserving image processing in
  the cloud,'' \emph{{IEEE} Cloud Computing}, vol.~5, no.~2, pp. 48--57, 2018.

\bibitem{JiangWHWLSR21}
P.~Jiang, Q.~Wang, M.~Huang, C.~Wang, Q.~Li, C.~Shen, and K.~Ren, ``Building
  in-the-cloud network functions: Security and privacy challenges,''
  \emph{Proceedings of the IEEE}, vol. 109, no.~12, pp. 1888--1919, 2021.

\bibitem{Cox}
{Cox Automotive on AWS}, ``Cox automotive scales digital personalization using
  an identity graph powered by amazon neptune,''
  \url{https://aws.amazon.com/cn/blogs/database/cox-automotive-scales-digital-personalization-using-an-identity-graph-powered-by-amazon-neptune/},
  2020, [Online; Accessed 15-May-2022].

\bibitem{ADP}
{Automatic Data Processing on AWS}, ``Adp uses amazon neptune to model the
  agile modern workplace,''
  \url{https://aws.amazon.com/cn/solutions/case-studies/adp-neptune-case-study/?pg=ln&sec=c},
  2020, [Online; Accessed 15-May-2022].

\bibitem{airbnb}
{Airbnb on AWS}, ``Aws case study: Airbnb,''
  \url{{https://aws.amazon.com/solutions/case-studies/airbnb/?nc1=h_ls}}, 2018,
  [Online; Accessed 15-May-2022].

\bibitem{PIXNET}
{PIXNET on AWS}, ``Aws case study: {PIXNET},''
  \url{{https://aws.amazon.com/solutions/case-studies/pixnet/}}, 2014, [Online;
  Accessed 15-May-2022].

\bibitem{ghareh2018new}
J.~Ghareh~Chamani, D.~Papadopoulos, C.~Papamanthou, and R.~Jalili, ``New
  constructions for forward and backward private symmetric searchable
  encryption,'' in \emph{Proc. of ACM CCS}, 2018.

\bibitem{DautermanFLPS20}
E.~Dauterman, E.~Feng, E.~Luo, R.~A. Popa, and I.~Stoica, ``{DORY:} an
  encrypted search system with distributed trust,'' in \emph{Proc. of USENIX
  OSDI}, 2020.

\bibitem{sun2021practical}
S.-F. Sun, R.~Steinfeld, S.~Lai, X.~Yuan, A.~Sakzad, J.~K. Liu, S.~Nepal, and
  D.~Gu, ``Practical non-interactive searchable encryption with forward and
  backward privacy.'' in \emph{Proc. of NDSS}, 2021.

\bibitem{gui2023rethinking}
Z.~Gui, K.~G. Paterson, and S.~Patranabis, ``Rethinking searchable symmetric
  encryption,'' in \emph{Proc. of IEEE S\&P}, 2023.

\bibitem{mohamed2008skyline}
M.~E. Khalefa, M.~F. Mokbel, and J.~J. Levandoski, ``Skyline query processing
  for incomplete data,'' in \emph{Proc. of IEEE ICDE}, 2008.

\bibitem{balke2004efficient}
W.~Balke, U.~G{\"{u}}ntzer, and J.~X. Zheng, ``Efficient distributed skylining
  for web information systems,'' in \emph{Proc. of EDBT}, 2004.

\bibitem{huang2006skyline}
Z.~Huang, C.~S. Jensen, H.~Lu, and B.~C. Ooi, ``Skyline queries against mobile
  lightweight devices in {MANETs},'' in \emph{Proc. of IEEE ICDE}, 2006.

\bibitem{deng2007multi}
K.~Deng, X.~Zhou, and H.~T. Shen, ``Multi-source skyline query processing in
  road networks,'' in \emph{Proc. of IEEE ICDE}, 2007.

\bibitem{liu2017secure}
J.~Liu, J.~Yang, L.~Xiong, and J.~Pei, ``Secure skyline queries on cloud
  platform,'' in \emph{Proc. of IEEE ICDE}, 2017.

\bibitem{liu2019secure}
------, ``Secure and efficient skyline queries on encrypted data,''
  \emph{{IEEE} Transactions on Knowledge and Data Engineering}, vol.~31, no.~7,
  pp. 1397--1411, 2019.

\bibitem{ding2021efficient}
X.~Ding, Z.~Wang, P.~Zhou, K.-K.~R. Choo, and H.~Jin, ``Efficient and
  privacy-preserving multi-party skyline queries over encrypted data,''
  \emph{{IEEE} Transactions on Information Forensics and Security}, vol.~16,
  pp. 4589--4604, 2021.

\bibitem{bothe2014skyline}
S.~Bothe, A.~Cuzzocrea, P.~Karras, and A.~Vlachou, ``Skyline query processing
  over encrypted data: An attribute-order-preserving-free approach,'' in
  \emph{Proc. of International Workshop on Privacy and Secuirty of Big Data},
  2014.

\bibitem{wang2022efficient}
Z.~Wang, X.~Ding, H.~Jin, and P.~Zhou, ``Efficient secure and verifiable
  location-based skyline queries over encrypted data,'' \emph{Proc. VLDB
  Endow.}, vol.~15, no.~9, pp. 1822--1834, 2022.

\bibitem{wang2020stargazing}
J.~Wang, M.~Du, and S.~S.~M. Chow, ``Stargazing in the dark: Secure skyline
  queries with {SGX},'' in \emph{Proc. of DASFAA}, 2020.

\bibitem{zhang2022efficient}
S.~Zhang, S.~Ray, R.~Lu, Y.~Zheng, Y.~Guan, and J.~Shao, ``Towards efficient
  and privacy-preserving user-defined skyline query over single cloud,''
  \emph{{IEEE} Transactions on Dependable and Secure Computing}, 2022.

\bibitem{demmler2015aby}
D.~Demmler, T.~Schneider, and M.~Zohner, ``{ABY} - {A} framework for efficient
  mixed-protocol secure two-party computation,'' in \emph{Proc. of NDSS}, 2015.

\bibitem{borzsonyi2001skyline}
S.~B{\"o}rzs{\"o}nyi, D.~Kossmann, and K.~Stocker, ``The skyline operator,'' in
  \emph{Proc. of ICDE}, 2001.

\bibitem{kossmann2002shooting}
D.~Kossmann, F.~Ramsak, and S.~Rost, ``Shooting stars in the sky: An online
  algorithm for skyline queries,'' in \emph{Proc. of VLDB}, 2002.

\bibitem{papadias2005progressive}
D.~Papadias, Y.~Tao, G.~Fu, and B.~Seeger, ``Progressive skyline computation in
  database systems,'' \emph{{ACM} Transactions on Database Systems}, vol.~30,
  no.~1, pp. 41--82, 2005.

\bibitem{tao2006maintaining}
Y.~Tao and D.~Papadias, ``Maintaining sliding window skylines on data
  streams,'' \emph{{IEEE} Transactions on Knowledge and Data Engineering},
  vol.~18, no.~2, pp. 377--391, 2006.

\bibitem{liu2015finding}
J.~Liu, H.~Zhang, L.~Xiong, H.~Li, and J.~Luo, ``Finding probabilistic
  k-skyline sets on uncertain data,'' in \emph{Proc. of ACM CIKM}, 2015.

\bibitem{pei2007probabilistic}
J.~Pei, B.~Jiang, X.~Lin, and Y.~Yuan, ``Probabilistic skylines on uncertain
  data,'' in \emph{Proc. of VLDB}, 2007.

\bibitem{liu2015findinga}
J.~Liu, L.~Xiong, J.~Pei, J.~Luo, and H.~Zhang, ``Finding pareto optimal
  groups: Group-based skyline,'' \emph{Proceedings of the VLDB Endowment},
  vol.~8, no.~13, pp. 2086--2097, 2015.

\bibitem{yu2017fast}
W.~Yu, Z.~Qin, J.~Liu, L.~Xiong, X.~Chen, and H.~Zhang, ``Fast algorithms for
  pareto optimal group-based skyline,'' in \emph{Proc. ACM CIKM}, 2017.

\bibitem{hahnel2017high}
M.~H{\"a}hnel, W.~Cui, and M.~Peinado, ``High-resolution side channels for
  untrusted operating systems,'' in \emph{Proc. of USENIX ATC}, 2017.

\bibitem{van2017telling}
J.~Van~Bulck, N.~Weichbrodt, R.~Kapitza, F.~Piessens, and R.~Strackx, ``Telling
  your secrets without page faults: Stealthy page table-based attacks on
  enclaved execution,'' in \emph{Proc. of USENIX Security}, 2017.

\bibitem{lee2017inferring}
S.~Lee, M.-W. Shih, P.~Gera, T.~Kim, H.~Kim, and M.~Peinado, ``Inferring
  fine-grained control flow inside sgx enclaves with branch shadowing,'' in
  \emph{Proc. of USENIX Security}, 2017.

\bibitem{lee2020off}
D.~Lee, D.~Jung, I.~T. Fang, C.-C. Tsai, and R.~A. Popa, ``An off-chip attack
  on hardware enclaves via the memory bus,'' in \emph{Proc. of USENIX
  Security}, 2020.

\bibitem{DellisS07}
E.~Dellis and B.~Seeger, ``Efficient computation of reverse skyline queries,''
  in \emph{Proc. of VLDB}, 2007.

\bibitem{RiaziWTS0K18}
M.~S. Riazi, C.~Weinert, O.~Tkachenko, E.~M. Songhori, T.~Schneider, and
  F.~Koushanfar, ``Chameleon: {A} hybrid secure computation framework for
  machine learning applications,'' in \emph{Proc. of ACM AsiaCCS}, 2018.

\bibitem{mohassel2017secureml}
P.~Mohassel and Y.~Zhang, ``{SecureML}: A system for scalable
  privacy-preserving machine learning,'' in \emph{Proc. of IEEE S\&P}, 2017.

\bibitem{meng2018top}
X.~Meng, H.~Zhu, and G.~Kollios, ``Top-k query processing on encrypted
  databases with strong security guarantees,'' in \emph{Proc. of IEEE ICDE},
  2018.

\bibitem{chen2020metal}
W.~Chen and R.~A. Popa, ``Metal: {A} metadata-hiding file-sharing system,'' in
  \emph{Proc. of {NDSS}}, 2020.

\bibitem{zheng2022optimizing}
Y.~Zheng, C.~Wang, R.~Wang, H.~Duan, and S.~Nepal, ``Optimizing secure decision
  tree inference outsourcing,'' \emph{IEEE Transactions on Dependable and
  Secure Computing}, 2022, doi:\url{10.1109/TDSC.2022.3194048}.

\bibitem{du2020graphshield}
M.~Du, S.~Wu, Q.~Wang, D.~Chen, P.~Jiang, and A.~Mohaisen, ``Graphshield:
  Dynamic large graphs for secure queries with forward privacy,'' \emph{{IEEE}
  Transactions on Knowledge and Data Engineering}, 2020.

\bibitem{cui2020svknn}
N.~Cui, X.~Yang, B.~Wang, J.~Li, and G.~Wang, ``Svknn: Efficient secure and
  verifiable k-nearest neighbor query on the cloud platform,'' in \emph{Proc.
  of IEEE ICDE}, 2020.

\bibitem{wang2022privacy}
S.~Wang, Y.~Zheng, X.~Jia, and X.~Yi, ``Privacy-preserving analytics on
  decentralized social graphs: The case of eigendecomposition,'' \emph{IEEE
  Transactions on Knowledge and Data Engineering}, 2022, doi:
  \url{10.1109/TKDE.2022.3185079}.

\bibitem{zheng2020securely}
Y.~Zheng, H.~Duan, C.~Wang, R.~Wang, and S.~Nepal, ``Securely and efficiently
  outsourcing decision tree inference,'' \emph{IEEE Transactions on Dependable
  and Secure Computing}, vol.~19, no.~3, pp. 1841--1855, 2022.

\bibitem{wang2022PeGraph}
S.~Wang, Y.~Zheng, X.~Jia, and X.~Yi, ``{PeGraph}: A system for
  privacy-preserving and efﬁcient search over encrypted social graphs,''
  \emph{IEEE Transactions on Information Forensics and Security}, 2022, doi:
  \url{10.1109/TIFS.2022.3201392}.

\bibitem{Mozilla}
{Mozilla Security Blog}, ``{Next steps in privacy-preserving Telemetry with
  Prio.}'' online at \url{
  https://blog.mozilla.org/security/2019/06/06/next-steps-in-privacy-preserving-telemetry-with-prio/},
  2022.

\bibitem{WhitePaper}
{Apple and Google}, ``{Exposure Notification Privacy-preserving Analytics
  (ENPA) White Paper.}'' online at \url{
  https://covid19-static.cdn-apple.com/applications/covid19/current/static/contact-tracing/pdf/ENPA_White_Paper.pdf},
  2021, [Online; Accessed 1-Jun-2022].

\bibitem{WangWHZR16}
Q.~Wang, J.~Wang, S.~Hu, Q.~Zou, and K.~Ren, ``Sechog: Privacy-preserving
  outsourcing computation of histogram of oriented gradients in the cloud,'' in
  \emph{Proc. of ACM AsiaCCS}, 2016.

\bibitem{0002SKG19}
N.~Agrawal, A.~S. Shamsabadi, M.~J. Kusner, and A.~Gasc{\'{o}}n, ``{QUOTIENT:}
  two-party secure neural network training and prediction,'' in \emph{Proc. of
  ACM CCS}, 2019.

\bibitem{curtmola2006searchable}
R.~Curtmola, J.~A. Garay, S.~Kamara, and R.~Ostrovsky, ``Searchable symmetric
  encryption: improved definitions and efficient constructions,'' in
  \emph{Proc. of ACM CCS}, 2006.

\bibitem{liu2021medisc}
X.~Liu, Y.~Zheng, X.~Yuan, and X.~Yi, ``Medisc: Towards secure and lightweight
  deep learning as a medical diagnostic service,'' in \emph{Proc. of ESORICS},
  2021.

\bibitem{mohassel2018aby}
P.~Mohassel and P.~Rindal, ``{ABY}\({}^{\mbox{3}}\): {A} mixed protocol
  framework for machine learning,'' in \emph{Proc. of ACM CCS}, 2018.

\bibitem{cole1988parallel}
R.~Cole, ``Parallel merge sort,'' in \emph{Proc. of IEEE FOCS}, 1986.

\bibitem{lindell2017how}
Y.~Lindell, ``How to simulate it - a tutorial on the simulation proof
  technique,'' in \emph{Tutorials on the Foundations of Cryptography}, 2017,
  pp. 277--346.

\bibitem{canetti2000security}
R.~Canetti, ``Security and composition of multiparty cryptographic protocols,''
  \emph{Journal of Cryptology}, vol.~13, no.~1, pp. 143--202, 2000.

\bibitem{katz2005handling}
J.~Katz and Y.~Lindell, ``Handling expected polynomial-time strategies in
  simulation-based security proofs,'' in \emph{Proc. of TCC}, 2005.

\bibitem{curran2019procsa}
M.~Curran, X.~Liang, H.~Gupta, O.~Pandey, and S.~R. Das, ``Procsa: Protecting
  privacy in crowdsourced spectrum allocation,'' in \emph{Proc. of ESORICS},
  2019.

\bibitem{araki2016high}
T.~Araki, J.~Furukawa, Y.~Lindell, A.~Nof, and K.~Ohara, ``High-throughput
  semi-honest secure three-party computation with an honest majority,'' in
  \emph{Proc. of ACM CCS}, 2016.

\end{thebibliography}

\end{document}